\newtcolorbox{mybox}[1][]{
boxrule=0.5pt, colframe=black, sharp corners, #1}
\newtcolorbox{maitebox}[1][]{colback=cyan!30, boxrule=0.5pt, colframe=black, sharp corners, #1}
\Crefname{lemma}{Lemma}{Lemmas}
\Crefname{proposition}{Proposition}{Propositions}
\Crefname{definition}{Definition}{Definitions}
\Crefname{theorem}{Theorem}{Theorems}
\Crefname{conjecture}{Conjecture}{Conjectures}
\Crefname{corollary}{Corollary}{Corollaries}
\Crefname{example}{Example}{Examples}
\Crefname{section}{Section}{Sections}
\Crefname{appendix}{Appendix}{Appendices}
\Crefname{figure}{Fig.}{Figs.}
\Crefname{equation}{Eq.}{Eqs.}
\Crefname{table}{Table}{Tables}
\Crefname{item}{Property}{Properties}
\Crefname{remark}{Remark}{Remarks}
\Crefname{axioms}{Axioms}{Axioms}
\newcommand{\newreptheorem}[2]{%
  \newenvironment{rep#1}[1]{%
    \def\rep@title{#2 \ref{##1}}%
    \begin{rep@theorem}%
  }{\end{rep@theorem}}%
}
\newtheorem{thm}{Theorem}
\newtheorem{lemma}[thm]{Lemma}
\newtheorem*{rep@theorem}{rep@title}
\definecolor{ma}{rgb}{246,76,246}
\definecolor{ha}{RGB}{246,76,246}
\definecolor{ma}{rgb}{0.858, 0.188, 0.478}
\newcommand{\bet}{f}
\newcommand{\RQ}{F}
\newcommand{\bone}{\mathds{1}}
\title{A resource theory of gambling}
\author[1,2]{Maite Arcos\thanks{\texttt{maite.arcose@gmail.com}}}
\author[3]{Renato Renner}
\author[1]{Jonathan Oppenheim}
\affil[1]{Department of Physics and Astronomy, University College London, UK}
\affil[2]{Institute of Physics, Ecole Polytechnique Fédérale de Lausanne (EPFL), Lausanne, Switzerland}
\affil[3]{Institute for Theoretical Physics, ETH Zurich, Zurich, Switzerland}
\date{September 12, 2025}
\begin{document}
\maketitle

\begin{abstract}
Betting games provide a natural setting to capture how information yields strategic advantage. 
The Kelly criterion for betting —long a cornerstone of portfolio theory and information theory—admits an interpretation in the limit of infinitely many repeated bets. We extend Kelly’s seminal result into the single‐shot and finite‐betting regimes, recasting it as a resource theory of adversarial information. This allows
one to quantify what it means for the gambler to have more information than 
the odds-maker. Given a target rate of return, after a finite number of bets, we compute the optimal strategy which maximises the probability of successfully reaching the target, revealing a risk-reward trade-off characterised by a hierarchy of Rényi divergences between the true distribution and the odds. The optimal strategies
in the one-shot regime  coincide with strategies maximizing expected utility, and minimising hypothesis testing errors, thereby bridging economic and information-theoretic viewpoints. We then generalize this framework to a distributed side‐information game, in which multiple players observe correlated signals about an unknown state. Recasting gambling as an adversarial resource theory provides a unifying lens that connects economic and information-theoretic perspectives, and allows for generalisation to the quantum domain, where quantum side-information and entanglement play analogous roles.

\end{abstract}


\tableofcontents

\pagebreak

\section{Introduction}

Adversarial interactions are fundamental to information processing—from strategic games and financial markets, to cryptographic security and quantum communication. In such settings, agents with conflicting goals compete by exploiting asymmetries in information. Quantifying the operational advantage that one agent gains over another thus lies at the heart of both strategic reasoning and information theory.

Betting provides an operational framework for quantifying informational advantage. In a gambling game, alignment between a gambler’s bets and the realised outcomes determines their measurable advantage—the closer the bets track the empirical outcomes, the greater the gain. This direct link between information and payoff makes gambling a primitive setting in which abstract notions of knowledge, uncertainty, and prediction acquire concrete, operational meaning.

The connection between information and gambling was first formalised by Kelly~\cite{kellyspaper}, who showed that in the asymptotic limit of many repeated bets, the optimal growth rate of a gambler’s wealth is governed by the relative entropy between the true outcome distribution and the odds set by the bookmaker. Kelly’s result established a bridge between Shannon’s information theory and rational decision-making under uncertainty, revealing that information gain directly translates into observable payoffs.

Despite its conceptual elegance, the Kelly framework also exposed a long-standing divide between economic and information-theoretic perspectives. While Kelly’s criterion is celebrated for its asymptotic optimality, economists—most notably Samuelson~\cite{samuelson1971fallacy, samuelson1979why}—criticised it for being a special case (maximising expected log-wealth) that fails to account for the spectrum of individual risk preferences as formalised in expected utility theory~\cite{vonneumann1944gametheory}. This tension between information-based optimality and utility-based rationality has persisted for decades, highlighting the absence of a unified operational account.

In this work, we resolve this tension by extending Kelly’s framework to the finite and single-shot regimes using tools from information theory~\cite{csiszar1998method}. This extension reveals a fundamental risk–reward trade-off governed by a hierarchy of Rényi divergences, with an agent’s degree of risk aversion selecting the optimal divergence order. We show that these strategies are mathematically equivalent to those that maximise expected utility for agents with constant relative risk aversion, thereby unifying the information-theoretic and economic perspectives within a single operational formalism. We further generalise this setting to distributed adversarial games with correlated side information, where Nash equilibria emerge asymptotically; demonstrating a connection to Harsanyi's theory of games with incomplete information \cite{harsanyi1967games1}. 

By casting gambling in the language of resource theories \cite{Oppenheim2010Quantumness, brandao_resource_2015, gour_chitambar_art_2018,gour2025quantum}, we provide a unifying framework where informational advantage becomes a tangible resource that can be quantified, transformed, and compared. This approach naturally connects economic decision theory (via Blackwell's theorem \cite{blackwell1953equivalent}) with physical resource theories, revealing they share the same underlying principle: an agent's operational power is determined by what transformations their informational resources permit. To further illustrate this point, in concurrent work \cite{AdversarialThermo}, we show that the results derived here also find application in thermodynamics, demonstrating that the structure found in gambling also underlies thermodynamic work extraction. Crucially, we also establish the formal foundation for a quantum generalisation. 

The paper is organised as follows. In Section \ref{modernkelly}, we review Kelly’s original framework and extend it to the single-shot and finite-horizon regimes, highlighting the resulting trade-offs between growth and risk. Given a target growth rate, we derive the betting strategy which optimises for achieving this rate. This gives a risk-reward trade-off. An important element of this is to reframe gambling in terms of the empirical realised type class of outcomes that actually occur.
We then connect these results to expected utility theory in Section \ref{equivalencewithclassicalswisspeople}, showing how different degrees of risk aversion correspond to distinct betting strategies, thereby linking the analysis to Rényi divergences and asymmetric hypothesis testing. In Section \ref{doublesideinfo}, we generalise the setting to distributed adversarial scenarios with side information, proving that Nash equilibria emerge asymptotically without invoking rationality axioms. This will allow us to generalise to the quantum scenarios \cite{ARTArcos2025adversarial}. The final casting of classical gambling as an adversarial resource theory is contained in Section \ref{sec:GamblingRT}. Finally, we discuss implications and open problems in the Conclusion. Technical details and background material are deferred to the Appendices.

\subsection{Notation and terminology}\label{gamblingnotation}

Let $X$ be a random variable taking values in an alphabet $\mathcal{X}$. In keeping with tradition in the gambling literature we will sometimes refer to $X$ as the outcome of a horse race. The probability distribution for $X$ is denoted as $P_{X}=\{ p_{X}(x) : x \in \mathcal{X} \}$, with $p_{X}(x)= Pr \{ X=x\}$. When there is no ambiguity, $p_{X}(x)$ will sometimes be abbreviated as $p(x)$. The support of $X$, denoted by $\mathcal{S}_{X}$, is the set of all $x \in \mathcal{X}$ such that $p(x)>0$. Given any function $g(X)$ of the random variable $X$, we will denote its expected value as $E(g(X))= \sum_{x} p(x)g(x)$. 

Let $x^{n}$ be an n-tuple of elements from the alphabet $\mathcal{X}$. The frequency distribution $\lambda_{x^{n}}$ of $x^{n}$ is the probability distribution on $\mathcal{X}$ defined by the relative number of occurrences of each symbol. We will refer to the frequency distribution $\lambda_{x^{n}}$ as the \textit{type} of the sequence $x^{n}$. We will denote the set of all types by $\mathcal{Q}_{n}$. For any type $ \lambda_{x^{n}} \in \mathcal{Q}_{n}$, we denote by $\Lambda_{n}^{\lambda_{x^{n}}}$ the corresponding \textit{type class}, i.e. the set of all $n$-tuples $ x^{n}= (x^{1},... ,x^{n})$ with frequency distribution $\lambda_{x^{n}}$. When we write expressions such as
$P_X(\Lambda_{x^n})$, this denotes the probability of the entire type class
$\Lambda_{n}^{\lambda_{x^{n}}}$. We denote by $\log$ the logarithm to the base two. 

For two probability distributions $P_{X}$, $Q_{X}$ on some set $\mathcal{X}$, the Rényi divergence of order $\alpha$ is defined for all $\alpha \in \mathbb{R}$ as 
\begin{equation}\label{RényiAlphaClass}
    D_{\alpha}(P_{X}||Q_{X})= \frac{\mathrm{sgn}(\alpha)}{\alpha -1} \log \bigg( \sum_{x} p_{X}(x)^{\alpha} q_{X}(x)^{1-\alpha}  \bigg) 
\end{equation}
We will refer to 
\begin{equation}
    D_{1}(P_{X}||Q_{X}) \equiv \lim_{\alpha^{+} \rightarrow 1} D_{\alpha}(P_{X}||Q_{X}) = \sum_{x} p_{X}(x) \log \bigg(\frac{p_{X}(x)}{q_{X}(x)} \bigg)
\end{equation}
 as the relative entropy (though it is also often called the Kullback-Leibler or KL divergence). For $\alpha= \infty$ and $\alpha=0$, we define $D_{\alpha}(P_{X}||Q_{X})$ analogously. We will denote the Shannon entropy of the probability distribution $P_{X}$ as $H(P_{X})$, where
 \begin{equation}
 H(P_{X})= -\sum_{x} p(x) \log(p(x)).
 \end{equation}



\section{Finite Kelly betting in terms of the a posteriori distribution}\label{modernkelly}

\subsection{Kelly's original framework}

The Kelly \cite{kellyspaper} betting scheme is a classical information-theoretic scenario in which the gambler, Alice, allocates fractions of her wealth to the possible outcomes of a horse race, always distributing a non-zero fraction to each outcome (in order to avoid ever being completely broke). The adversary, Bob, sets odds for each possible outcome. 

Suppose the horse race is described by a random variable $X$. Let $Q_X^A$ denote Alice’s betting distribution (i.e., $Q_X^A(x)$ is the fraction of wealth she allocates to outcome $x$), and let $Q_X^B$ denote Bob’s odds distribution. With this notation, if outcome $x$ occurs in a given round, Alice’s wealth is multiplied by a factor $\tfrac{Q_X^A(x)}{Q_X^B(x)}$. 

Assuming Alice reinvests her wealth after each round, her wealth after $n$ rounds is
\begin{equation}
    W_{n}= W_{i}\prod_{x}\left(\frac{Q_{X}^{A}(x)}{Q_{X}^{B}(x)}\right)^{N_{x}},
    \label{eq:reinvesting}
\end{equation}
where $N_x$ is the number of times outcome $x$ occurred in the $n$ rounds and $W_i$ is her initial wealth.

In the limit $n \to \infty$, the ratio of Alice’s final wealth to her initial wealth satisfies 
\begin{equation}\label{KellyCT}
   \boxed{ \frac{W_{F}}{W_{i}}
    \;=\; 2^{{\,n\!\left(D(P_{X}\,\|\,Q_{X}^{B}) \;-\; D(P_{X}\,\|\,Q_{X}^{A})\right)} }},
\end{equation}
where $P_X$ is the true distribution of the race outcomes.  $W_n/W_i$ is sometimes called the {\it wealth relative}. Since relative entropy is non-negative, Alice’s optimal strategy is to bet according to $Q_X^A=P_X$, ensuring that her wealth grows at the maximum possible rate. Equation~\eqref{KellyCT} is the classical result of Kelly \cite{kellyspaper}.

\subsection{Single-shot and finite $n$ Kelly betting }\label{finitekelly}

Before starting the discussion on finite-$n$ Kelly betting, it is worth clarifying that there is a discrepancy in the way that the equation 
\[
\frac{W_F}{W_i} = \exp\left(n \left( D(P_X \| Q_X^B) - D(P_X \| Q_X^A) \right) \right)
\]
is interpreted in information-theoretic and economic contexts. In information-theoretic contexts, this equation is often interpreted as indicating that winning the Kelly betting game reduces to the ability of a gambler and adversary to estimate the true probability distribution of the random variable \cite{singleshotkellydifferentderivation, Kull2022ANovelKellyInterpretation}. This interpretation aligns with how the relative entropy (or Kullback-Leibler divergence) is used in many information-theoretic tasks, where it quantifies the inefficiency of approximating one distribution with another.

However, from an expected utility point of view, and in the original Kelly scenario itself, the true probability distribution $P_X$ of the random variable (in this case, the horse races) is  known. The expected utility hypothesis, a cornerstone of economic decision theory, posits that individuals make choices to maximize their expected utility, and that this expected utility is a measure of satisfaction. This framework incorporates risk aversion by assigning lower utility to riskier outcomes. 

The aim of this section is to re-formulate the Kelly betting scenario for a finite number of horse races. Whilst the traditional Kelly scenario relies on the asymptotic convergence of the observed sequences of outcomes to the true distribution of the races, in the formulation below, we consider a gambler who infers strategies in terms of the possible empirical frequency distributions of observed outcomes (i.e. the possible types of the {\it a posteriori} distribution). For finite numbers of horse races, we show that the gambler faces a risk-reward tradeoff between wealth growth and probability of success, which converges to Kelly's \cite{kellyspaper} well-known result in the asymptotic limit. 


We now formulate the Kelly's scenario in the finite-size regime.  We consider the situation where Alice is interested in betting on the outcome of $n$ horse races, where $n$ is finite. Let $X$ be a random variable taking values in an alphabet $\mathcal{X}$, which describes the outcome of the horse races. Suppose that there are $k$ horses. Let $X_{1}, X_{2},...X_{n}$ be drawn i.i.d. according to the probability distribution $P_{X}$. 

A key insight is that Kelly’s repeated i.i.d.\ betting process is equivalent to a single bet on the entire sequence $x^n \equiv x_1 \ldots x_n$. This is evident from ~\eqref{eq:reinvesting}, which shows that after $n$ rounds, the ratio of Alice’s initial wealth to final wealth can be written as
\begin{equation}
    \frac{W_n}{W_i} 
    = \frac{\prod_x Q_X^A(x)^{N_x}}{\prod_x Q_X^B(x)^{N_x}}
    = \frac{Q_X^A(x^n)}{Q_X^B(x^n)},
\end{equation}
with Alice  allocating fractions of her money to each of the possible strings $x^{n} \equiv x_{1}...x_{n}$.
In other words, the repeated-betting process is equivalent to a 
\emph{single bet on the entire sequence} $x^n$, where the stake is
\[
Q_X^A(x^n) = \prod_x Q_X^A(x)^{N_x},
\]
and the odds are
\[
Q_X^B(x^n) = \prod_x Q_X^B(x)^{N_x}.
\]








Since all sequences with the same type $\lambda_{x^n}$ occur with the same probability, a rational strategy should allocate the same fraction of wealth to them. A natural way to enforce this is to choose a letter-wise betting distribution $Q_X^A(x)$ and allocate bets to strings according to $Q_X^A(x^{n})$. A well-known result from the method of types \cite{csiszar1998method} tells Alice that the fraction to allocate to each possible \textit{string} is given by: 
\begin{equation}\label{thetype}
   Q_{X}^{A}(x^{n})= 2^{-n(H(\lambda_{x^{n}}) + D(\lambda_{x^{n}}||Q_{X}^{A}))}
\end{equation}
%
Expression \eqref{thetype} tells us that the fraction allocated to each sequence depends only on its type, as does its probability. 
By the same argument, we see that if Bob allocates odds according to the distribution $Q^{B}_{X}$ for each outcome, he will assign 
\begin{equation}
   Q^{B}_{X}(x^{n})= 2^{-n(H(\lambda_{x^{n}}) + D(\lambda_{x^{n}}||Q^{B}_{X}))}
\end{equation}
to each sequence of races of type $\lambda_{x^{n}}$.

When the outcome of the i.i.d.\ source is revealed to be a string belonging to type class $\lambda_{x^{n}}$, Alice's wealth therefore grows according to the ratio
\begin{equation}\label{singleshotKelly}
   \frac{W_{F}}{W_{i}}=  \frac{Q_{X}^{A}(x^{n})}{Q_{X}^{B}(x^{n})}= 2^{n(D(\lambda_{x^{n}}||Q^{B}_{X})-D(\lambda_{x^{n}}||Q^{A}_{X}))}
\end{equation}
for any value of $n$. 
Since $\lambda_{x^{n}}$ converges to the true distribution as $n$ goes to infinity, we recover Kelly's result in this limit. 

In the finite-size regime, the ratio of Alice's final to initial wealth is a random variable. Equation \ref{singleshotKelly} shows, in particular, that when Alice's allocation matches the type of the observed sequence, her wealth grows according to 
\begin{equation}
   \frac{W_{F}}{W_{i}}=  2^{nD(\lambda_{x^{n}}||Q^{B}_{X})}
   \label{eq:WealthIfCorrect}
\end{equation}
Since different types occur with different probabilities, Alice's choice of strategy inevitably depends on her willingness to compromise between the amount of money she would make if she guessed the type of the sequence correctly and the probability of that sequence/type. This choice reflects Alice's risk aversion. For example, an extremely risk-averse Alice might choose the strategy $Q_{X}^{A}=Q_{X}^{B}$, ensuring that her wealth remains constant with probability one and therefore avoiding any risk.

Alice can use the same reasoning to calculate the probability of the sequence $x^{n}$: 
\begin{equation}
    P_{X}(x^{n}) = 2^{-n\!\left( H(\lambda_{x^{n}})+D(\lambda_{x^{n}}\|P_{X}) \right)}
\label{eq:ProbSeq}
\end{equation}

Since all sequences of the same type $\lambda_{x^n}$ have identical probability, 
the total probability that the outcome is a string in that type class is
\begin{equation}
    P_{X}(\lambda_{x^{n}}) 
    = |\Lambda_{\lambda_{x^n}}|\; P_{X}(x^{n}), \qquad x^n\in \Lambda_{\lambda_{x^n}}, \label{eq:TypeProbExact}
\end{equation}
where $|\Lambda_{\lambda_{x^n}}|$ is the size of the type class.
It is well known \cite{CoverThomas2006} that
\begin{equation}
    \frac{1}{(n+1)^{|\mathcal{X}|}}\, 2^{\,n H(\lambda_{x^n})}
    \;\leq\; |\Lambda_{\lambda_{x^n}}|
    \;\leq\; 2^{\,n H(\lambda_{x^n})}.
    \label{eq:TypeSizeBounds}
\end{equation}
Thus, up to subexponential factors in $n$, we may write  the standard large-deviation estimate
\begin{equation}
    P_{X}(\lambda_{x^{n}}) \doteq 2^{-n\,D(\lambda_{x^{n}}\|P_{X})}
    \label{eq:probtype}
\end{equation}
where $\doteq$ denotes equality up to sub-exponential terms.



The potential reward for type
$\lambda_{x^n}$ is given by $ \exp( n D(\lambda_{x^n}\|Q_X^B))$, while from
\cref{eq:probtype} its probability is $\exp(-nD(\lambda_{x^n}\|P_X))$. Each type $\lambda$ thus corresponds to a
risk–reward pair. Because Alice cannot know in advance which type will
occur, she must select a bet $Q_X^A$, and her choice
determines which types she aligns most closely with.

 A natural problem we may encounter is that after $n$ rounds, Alice may wish to achieve a return on investment of at least $R_n$, i.e. she would like 
\begin{align}
\frac{1}{n}\log \frac{W_F}{W_i}\geq R_n\,.
\end{align}
Given this constraint, we want to find the optimal betting strategy which optimises the probability $\epsilon(R_n)$ of achieving this rate of return. The curve $R_n$ vs $\epsilon$ is the risk reward trade-off. Equivalently, we can first fix the probability of success $\epsilon$, and then maximise her achievable reward, in the process finding the optimal betting strategy which achieves a return of at least $R_n(\epsilon)$. This is just the inverse $R_n(\epsilon(R_n))=R_n$. We will consider this equivalent formulation, because as we shall soon see, it  corresponds most naturally to hypothesis testing, a task with a deep connection to resource theories\cite{ogawa2002strong,hayashi2003general,nagaoka2007information,BrandaoPlenio2010,tomamichel2013hierarchy,li2014second,brandao_resource_2015,wang2019resource,buscemi2019information,sagawa2021asymptotic,gour_chitambar_art_2018,gour2025quantum}.

If Alice demands that the probability of success exceeds some $\epsilon$, then this corresponds to sets $A_\lambda$ of type classes, such that
\begin{equation}\label{constraintforprobability}
    \sum_{\lambda\in A} P_X(\lambda_{x^n})\geq \epsilon\,.
\end{equation}
Under this constraint, she then wishes to find the optimal betting strategy, and set $A_\lambda$  such that her return $R_n$ is as large as possible. In Appendix \ref{app:sanov} we show that up to sub-exponential terms, we can reformulate the constraint of Eq \eqref{constraintforprobability} as one which corresponds to finding single type classes which satisfy
\begin{equation}\label{eq:RiskConstraint}
    P_X(\lambda_{x^n})
    \;\doteq \; 2^{{-n\,D(\lambda_{x^n}\,\|\,P_X)}}
    \;\geq\; \epsilon \,.
\end{equation}
and then maximising her achievable reward which corresponds to $D(\lambda_{x^n}\|Q_X^B)$ under this constraint. 

Mathematically, these trade-offs can be expressed as constrained optimisation problems. 
For instance, certain individuals may wish to maximise the payoff 
$D(\lambda_{x^{n}} \,\|\, Q_{X}^{B})$ subject to a constraint on the probability of error
$D(\lambda_{x^{n}} \,\|\, P_{X})$, whilst others might reverse the roles. In either case, the optimisation is over the empirical 
distribution (type) $\lambda_{x^{n}}$.
These problems are well-understood in classical information theory: they amount to finding the most 
likely type $\lambda$ that balances two competing relative entropy terms. A standard 
Lagrange multiplier argument, which we describe in detail in \cref{lagrangemultipliers}, shows that the optimal bet is\footnote{See, e.g., \cite{CoverThomas2006} for an overview of this  in the context of hypothesis testing.}: 
\begin{equation}\label{OptimiserHT}
    Q_{X}^{A*}(x) \;=\; 
    \frac{P_{X}(x)^{\eta}\, Q_{X}^{B}(x)^{\,1-\eta}}
         {\sum_{x'} P_{X}(x')^{\eta}\, Q_{X}^{B}(x')^{\,1-\eta}}.
\end{equation}
Here $\eta$ is determined by the constraint \cref{eq:RiskConstraint}: it plays the role of a Lagrange 
multiplier that interpolates between $P_{X}$ and $Q_{X}^{B}$. Intuitively, 
$\eta = 1$ recovers $P_{X}$, $\eta = 0$ recovers $Q_{X}^{B}$, and intermediate values 
yield exponential mixtures of the two. Thus, the optimiser $Q_X^{A*}$ selects the effective 
distribution that generates the most wealth given the gambler’s risk preference.  

Recall that Alice succeeding with probability  at least $\epsilon$, is equivalent to the probability that the empirical type class $\lambda_{x^n}$ matches her bet $Q_X^{A*}$. We thus obtain the following result:
\begin{mybox}
For a bet which must succeed with probability at least $\epsilon$, the ratio of 
Alice's initial wealth to final wealth is at least
\begin{align}
    \frac{1}{n}\log\frac{W_F}{W_i}&\geq{D(Q_X^{A*}||Q_X^B)}\nonumber\\
    &\geq D_{\lambda(\epsilon)}(P_X||Q_X^B)+\frac{\lambda(\epsilon)}{1-\lambda(\epsilon)}\frac{\log{\epsilon}}{n}
\end{align}
where the Rényi-divergence is defined in Eqn \eqref{RényiAlphaClass}
and $\eta \in(0,1)$.
We have used the identity
\begin{equation}\label{eq:reward-identity}
D\!\left(Q_X^{A*}\,\|\,Q_X^B\right) 
=  D_\lambda\!\left(P_X \,\|\, Q_X^B\right)-\frac{\lambda}{1-\lambda}\, D\!\left(Q_X^{A*}\,\|\,P_X\right) 
\end{equation}
(see Appendix \ref{sec:thesimplex}). This not only defines a risk-reward trade-off, quantified by the pairs $\{ \epsilon,D_{\lambda(\epsilon)}\}$, it also gives an optimal strategy for achieving it. Namely, Alice should bet according to Eq.~\eqref{OptimiserHT}.
\end{mybox}
The identity, Eq. \eqref{eq:reward-identity} itself gives a risk-reward trade-off for gambling, since the left hand side quantifies the reward, and the first term on the right quantifies the risk.
For a given risk tolerance $\epsilon$, the constraint of Eq \eqref{eq:RiskConstraint} $D(\lambda_{x^n} || P_X) \leq \frac{1}{n}\log \epsilon$ defines a set of acceptable types $\lambda_{x^n}$. The optimal trade-off is found by solving the constrained optimization over this set, which introduces a Lagrange multiplier $\lambda$. This multiplier $\lambda$ is therefore a function of the chosen risk level, i.e., $\lambda = \lambda(\epsilon)$. The optimizer $Q_X^{A*}$ in Eq. \eqref{OptimiserHT} is subsequently determined by this $\lambda(\epsilon)$. Finally, this value of $\lambda$ defines the specific Rényi divergence $D_{\lambda(\epsilon)}(P_X || Q_X^B)$ that characterizes the achievable reward in the resulting risk-reward pair ${\epsilon, D_{\lambda(\epsilon)}(P_X || Q_X^B)}$. This chain of dependence ($\epsilon \rightarrow \lambda \rightarrow Q^{A*} \rightarrow D_\lambda$) precisely quantifies how a gambler's risk aversion shapes their optimal strategy and its resulting payoff profile.


As expected, this family includes the cases where Alice’s
strategy is proportional to the odds ($Q_X^A=Q_X^B$), or to the true
distribution ($Q_X^A=P_X$). In the following section, we relate this
solution and its parameter $\lambda$ to expected utility formulations
and degrees of risk aversion.

\subsection{Connection to Expected Utility Theory}\label{equivalencewithclassicalswisspeople}





The expected utility hypothesis, proposed by von Neumann and Morgenstern \cite{vonneumann1944gametheory} and a cornerstone of economic decision theory, states that rational individuals make decisions to maximize their expected utility rather than the expected value of payoffs. In this framework, a utility function captures the agent’s preferences over outcomes, so that two gambles with the same expected value may nevertheless be ranked differently. This makes it possible to formalise the idea that individuals may prefer safer or riskier options depending on their attitudes toward uncertainty.

In the classical presentation of expected utility theory, one compares preferences over externally specified lotteries. For example, a risk-averse person might prefer a guaranteed 100 over a 50 per cent chance of winning 250, even though the latter has a higher expected value. In the gambling framework considered here, the situation is more nuanced: the agent’s betting strategy itself defines the lottery. Choosing a distribution $Q_X^A$ against odds $Q_X^B$ induces the distribution of wealth outcomes, which means the decision problem is not merely to rank fixed lotteries but to optimise over a family of lotteries generated by strategic choices. Risk aversion is therefore directly intertwined with strategy design, as different utility functions correspond to different optimal allocations.

From this angle, the expected utility hypothesis is best understood as prescribing an optimisation principle: given a utility function encoding risk attitude, the agent selects the strategy $Q_X^A$ that maximises expected utility of wealth. Different utility functions induce different optimal strategies, interpolating between safe and aggressive allocations. This framing makes explicit how “utility” in our setting depends on strategy, rather than being merely a passive evaluation of uncertainty.

When initially published, Kelly’s \cite{kellyspaper} result faced criticism for two main reasons. First, it was interpreted by some as claiming that the strategy of proportional gambling (that which maximises the expected logarithm) is optimal for all individuals, regardless of their risk preferences. This interpretation overlooks the fact that different gamblers may have varying levels of risk tolerance as proposed in expected utility theory, leading to suboptimal outcomes for risk-averse individuals. Second, Kelly’s framework lacks validity in the short-term (finite-size) regime, where the gambler faces a limited number of bets and cannot rely on the law of large numbers to average out fluctuations. 

Nobel laureate Paul Samuelson was among the most prominent critics of Kelly’s proposal. In a series of papers \cite{samuelson1971fallacy, samuelson1979why}, he argued that the Kelly criterion could not be universally valid, since no single strategy can simultaneously maximise all possible utility functions. Samuelson’s aim was to demonstrate, in a mathematically rigorous way, that proportional gambling is not a general principle of rational choice. This line of critique was less about the internal consistency of Kelly’s result—log-utility maximisation is uncontroversial—and more about resisting claims that Kelly betting should apply across the board to all investors. The debate has continued \cite{CarrCherubini2022, Poundstone2005, Ziemba2012Response}, with researchers seeking broader formulations that embed Kelly as a special case while capturing a spectrum of risk attitudes.

A notable example connecting the Kelly utility function to other forms of risk aversion is given in \cite{classicalswisspeople}, where the authors analyse Kelly gambling from the perspective of Constant Relative Risk Aversion (CRRA). CRRA describes a class of utility functions in which an individual's relative risk aversion remains constant regardless of their level of wealth. For example, if a person has more wealth, they might risk a larger absolute amount while keeping the proportion of wealth they are willing to risk constant. This property makes CRRA particularly suitable for analyzing the Kelly paradigm, as it aligns with the proportional nature of Kelly betting as well as the i.i.d.\ strategy assumption.The CRRA family of utility functions is given by: 
\begin{equation}
    u_\beta(w) = 
    \begin{cases}
      \dfrac{w^{\,1-\beta}-1}{1-\beta}, & \beta \neq 1,\\[1ex]
      \log w, & \beta = 1,
    \end{cases}
\end{equation}
where $\beta \ge 0$ is the relative risk aversion parameter.
The $\beta$ parameter
continuously interpolates between logarithmic utility ($\beta=1$) and other risk attitudes. 


\label{optimisers}
In \cite{classicalswisspeople}, 
the authors prove that Alice's optimal strategy for a particular value of $\beta$ is given by $Q_{X}^{A,\beta}$, where
\begin{equation}
Q_{X}^{A,\beta}(x) = \frac{P_X(x)^{\frac{1}{1-\beta}} O_X(x)^{\frac{\beta}{1-\beta}}}{\sum_{x'} P_X(x')^{\frac{1}{1-\beta}} O_X(x')^{\frac{\beta}{1-\beta}}},
\end{equation}
with \( O_X(x)^{-1} = Q_X^B(x) \) in our definition of odds. 
This result highlights the role of the parameter \( \beta \) in controlling the trade-off between risk and reward. Notably, when \( \beta \to 0 \), the utility function reduces to logarithmic utility, corresponding to the Kelly criterion. For other values of \( \beta \), the utility function captures different levels of risk aversion.

We now want to understand how the result 
of \cite{classicalswisspeople} relates to the framework described in \cref{finitekelly}, where we characterized an individual's preferences in terms of their ability to compromise between the probability of a type and the wealth obtained when the type matches the observed sequence. Noting that
\begin{equation}
    \frac{-\beta}{1-\beta} = 1 - \frac{1}{1-\beta},
\end{equation}
we see that the optimizer \( g^{(\beta)} \) found by \cite{classicalswisspeople} matches the optimizer in Equation \ref{OptimiserHT} after re-defining \( \eta = \frac{1}{1-\beta} \). This equivalence demonstrates that the risk-reward tradeoff presented in our information-theoretic framework is fundamentally connected to the utility maximization problem analysed in \cite{classicalswisspeople}. 

We note that though parts of the argument for risk-reward tradeoffs presented in the previous section relied on $n$ being `sufficiently large', the expected utility hypothesis itself relies on the notion of expected values, which are most meaningful in the limit of large $n$. In finite-
$n$ scenarios, the actual realized outcomes may deviate significantly from the expected value, making the expected utility hypothesis less applicable. This creates a tension when applying the hypothesis to single-shot or finite-horizon decision-making, as the gambler cannot rely on the law of large numbers to average out fluctuations. We also emphasise that equation \ref{singleshotKelly}  holds for any value of $n$. 

Using the optimal strategy, we show in \cref{ExpectedwealthAppendix} that the wealth for a rational individual may be written as
\begin{equation}
\boxed{
\log \bigg(  \frac{W_F}{W_i}  \bigg) = \alpha D(P_X || Q^B_X) + (1 - \alpha) D_\alpha(P_X || Q^B_X), \quad \text{where} \quad \alpha = \frac{1}{1-\beta}
} \label{eq:final_expected_work}
\end{equation}

Which shows that the wealth of a rational individual is parametrised by Rényi divergences.

While \cite{classicalswisspeople} reduced the CRRA optimization problem to expressions involving Rényi divergences, they did not evaluate the corresponding payoff. By substituting the optimiser back into the Kelly expression, we find that the expected wealth takes the remarkably simple form above: a convex combination of KL and Rényi divergences. This step shows that Rényi divergences do not merely appear as auxiliary quantities in the optimization, but directly parametrize the operational payoff of the optimal strategy of a rational individual. This contrasts with heuristic financial interpretations of Rényi divergences \cite{soklakov2020economics}. 

Our results highlight how tools from information theory (Rényi divergences) and concepts from economics (CRRA utilities) are two views of the same optimisation problem.




\section{Gambling with side information}\label{doublesideinfo}


The Kelly framework, generalised above to the finite-size regime, models a \textit{passive} adversary: Bob sets fixed odds, and Alice optimizes unilaterally. Although the framework possesses a fundamental resource-theoretic structure as described above, many adversarial scenarios involve \textit{active} opponents who strategically adapt to maximize their own advantage. We have also assumed that Alice knows the prior distribution $P_X$ which is drawn i.i.d.\ and we would like to consider the case where in each round, Alice has different information about the race outcomes.

We would like to extend the scenario we considered in Section \ref{modernkelly} to domains like cryptography, where adversaries (eavesdroppers) actively probe protocols to extract secrets~\cite{krawec2020gametheoryquantumcrypto}; economic competition, where traders and bidders dynamically counter opponents' strategies~\cite{vonStengelKoller1997}; and quantum protocols, where malicious parties exploit entanglement and measurements to gain advantage—all scenarios requiring active strategic interplay beyond passive constraints

In this section, we generalise Kelly gambling to a distributed adversarial game in which the players Alice (gambler), Bob (adversary), and Charlie (referee/source) observe correlated signals—$X$, $Y$, and $Z$ respectively—but lack knowledge of each other's information. Alice and Bob simultaneously commit to actions based solely on their private signals.

The tripartite structure, defined by a joint distribution $P_{XYZ}$, is essential for three reasons: it explicitly models asymmetric information (e.g., Alice observes $X$, Bob observes $Y$, and neither accesses $Z$ directly); it enforces strategic interdependence by making payoffs contingent on both players' actions simultaneously; and it enables quantum generalization, where $\rho_{ABC}$ naturally replaces $P_{XYZ}$ to model distributed quantum states. 

This setup, more appropriate to model real-world scenarios where decision-makers must act based on incomplete or noisy information, is explored in economics and game theory through the lens of games with \textit{incomplete information}. Though the betting game is zero-sum by assumption, and we later prove that asymptotically optimal strategies form a Nash\footnote{A \textit{Nash equilibrium} is a set of strategies where no player can gain a higher payoff by unilaterally changing their own strategy, given what the other players are doing. } equilibrium \textit{without} the rationality assumptions often imposed in game theory.

 For readers unfamiliar with game theory, we note that games with \textit{incomplete information} are games in which players are uncertain about some important parameters of the game situation, such as the payoffs and the strategies available to other players. Although the players are uncertain about these parameters, each player may have a subjective probability distribution over the alternative possibilities. In these settings, it is common \cite{harsanyi1967games1, Rasmusen2007} to assume that the distributions which each player has are `mutually consistent' in the sense that they can all be derived from a common probability distribution, which is known to all players. Harsanyi's foundational papers \cite{harsanyi1967games1} proved that, in these settings, Nash equilibria can be defined by introducing a \emph{delayed commitment} mechanism in which players first observe their private information before choosing strategies. This contrasts with classical complete-information games, where equilibria are specified directly from the payoff matrix without reference to any underlying information structure. 

When the consistency assumption holds such games can be transformed into an equivalent game of complete but \textit{imperfect information}, called the Bayes-equivalent of the original game. In the Bayes-equivalent of the original game, a fictitious player, often called `Nature', moves first, selecting the unknown parameters according to the common probability distribution, and then the game proceeds as a game in which players observe only their own private information but not that of others. This transformation allows the application of standard game-theoretic concepts, such as Nash equilibria, to analyse strategic interactions in settings with uncertainty.

\subsection{Gambling with incomplete information}

Let us consider a game involving three players: Alice, Bob, and Charlie; as in previous chapters. Alice and Bob each share correlations with Charlie's system, and their strategies depend on their respective side information. We imagine that the game is described by a tripartite probability distribution $P_{XYZ}$, and that a sequence $(x^{n},y^{n},z^{n})$ is jointly sampled from $P_{XYZ}$. Alice has access to $x^{n}$, whilst Bob has access to $y^{n}$, and they bet on the outcome $z^{n}$. We assume that the total distribution $P_{XYZ}$ from which Alice and Bob can calculate marginal distributions such as $P_{XZ}$, $P_{XY}$ and any other probabilities of interest whenever they observe realisations of their random variables is known to everyone. This is a standard assumption in game theory \cite{harsanyi1967games1}. We will continue to assume that the sequence  $(x^{n},y^{n},z^{n})$ is sampled i.i.d, and that Alice's and Bob's strategies reflect this. Unlike the standard Kelly scenario, we will no longer assume that Alice observes the odds before deciding on her strategy, in order to emphasise that the role of the adversary is no longer passive and that Alice has incomplete information about Bob's preferences. This will also allow easier generalisation to the quantum scenario in \cite{ARTArcos2025adversarial}. As in the previous section, our first focus is to understand what Alice and Bob think the probability of a sequence $z^{n}$ is given that they have observed the realisation of $x^{n}$ and $y^{n}$ respectively, as well as how they may allocate fractions of their wealth to different sequences based on their observations.

Following the methods used in the previous chapter, and in particular the method of types \cite{csiszar1998method}, we can find expressions for Alice's change in wealth as well as the probabilities of different sequences.
Suppose that $(x^{n},z^{n})$ are drawn $n$ times i.i.d. from the distribution $P_{XZ}$. We may first write the probability of the sequence $z^{n}$ given $x^{n}$ as
\begin{align}
    P_{XZ}(z^{n}|x^{n})&= \prod_{i} P(z_{i}|x_{i})
\nonumber\\
   & = \prod_{x \in \mathcal{X}, z \in \mathcal{Z}} P(z|x)^{N(x,z|x^{n},z^{n})}
\end{align}
where $N(x,z|x^{n},z^{n})$ is the number of times the pair $(x,z)$ occurs. Using that $N(x,z|x^{n},z^{n})= n \lambda_{x^{n},z^{n}}(x,z)$, we can write
\begin{equation}
    P_{XZ}(z^{n}|x^{n})= \prod_{(x, z) \in \mathcal{X} \times \mathcal{Z}} P(z|x)^{n \lambda_{x^{n},z^{n}}(x,z)}
\end{equation}
This can be written, as first shown by \cite{csiszar1998method}:
\begin{equation}\label{csiszartypes}
    P_{XZ}(z^{n}|x^{n})= 2^{-n(H(\lambda_{x^{n},z^{n}})-H(\lambda_{x^{n}})+ D(\lambda_{x^{n}z^{n}}\| W_{XZ} ))}
\end{equation}
where $W_{XZ}$ is the distribution defined by
\begin{equation}
    W_{XZ}(x,z)= \lambda_{x^{n}}(x)P_{XZ}(z|x)
\end{equation}

As in the previous chapter, Alice and Bob can use \cref{csiszartypes} to calculate the probability of the sequences $z^{n}$ given their side information and the overall known distribution.

Suppose now that Alice chooses to allocate fractions of her wealth to possible outcomes $z \in \mathcal{Z}$ whenever she receives $x \in \mathcal{X}$ according to fractions given by $Q^{A}(z|x)$. If Bob assigns odds according to $Q^{B}(z|y)$, the payoff received by Alice when the outcome $z^{n}$ is revealed is given by
\begin{align}\label{singleshotbayesian}
    \frac{Q^{A}(z^{n}|x^{n})}{Q^{B}(z^{n}|y^{n})}
    =&
     2^{n(D(\lambda_{y^{n}z^{n}}\| \tilde{W}_{YZ} )-D(\lambda_{x^{n}z^{n}}\| \tilde{W}_{XZ})  
    + H_\lambda(Z|Y)-H_\lambda(Z:X)
     )}
\end{align}
where $\tilde{W}_{XZ}(x,z) = \lambda_{x^{n}}(x)Q^{A}(z|x)$, $\tilde{W}_{YZ}(y,z)=\lambda_{y^{n}}(y)Q^{B}(z|y)$ are the distributions defined by Alice's and Bob's strategies respectively, and 
$H_\lambda(Z|Y):=H(\lambda_{y^{n},z^{n}})-H(\lambda_{y^{n}})$, $H_\lambda(Z|X):=H(\lambda_{x^{n},z^{n}})-H(\lambda_{x^{n}})$ can be thought of as a empirical mutual information and is independent of Alice and Bob's strategies. From the expression above, it is clear that Alice must choose a conditional strategy $Q^{A}(z|x)$ such that the product matches the \textit{joint} type. This expression generalises the single-shot Kelly result to the case where both gambler and adversary have access to side information. 

This ratio can also be expressed in a more elucidating form:
\begin{equation}\label{eq:GlobalTypeRelative}
\boxed{\frac{Q^A(z^{n}| x^{n})}{Q^B(z^{n}| y^{n})}
= 2^{n\big(D(\lambda_{x^{n} y^n z^{n}}\| \tilde{W}_{B})
-D(\lambda_{x^n y^{n} z^{n}}\| \tilde{W}_{A})
\big)}},
\end{equation}
where $\tilde{W}_A(x,y,z)=\lambda_{x^n y^n}(x,y)Q^A(z| x)$ and $\tilde{W}_{B}(x,y,z)=\lambda_{x^n y^n}(x,y)Q^B(z|y)$


Unlike the standard Kelly scenario—a one-sided optimization problem against fixed odds—we now model a game in which both Alice and Bob must reason about the strategies of one another. In particular, a risk-seeking Alice may choose to concentrate her bet on strings for which 
$D(\lambda_{x^{n} y^n z^{n}}\| \tilde{W}_{B})$
is high, which would guarantee higher payoffs. 
After she observes her $x^n$, this corresponds to optimising over type-classes $\lambda_{y^nz^n}(yz|x)$. In such a case, her probability of success is given in terms of $D(\lambda_{y^nz^n}(y,z|x))$ as in Eq. \eqref{eq:RiskConstraint}.
The shift in Bob's role transforms the interaction into a true game in the game-theoretic sense: Alice and Bob now engage in a strategic interplay where each player’s decisions directly constrain the other’s outcomes. We discuss this in more detail in \cite{ARTArcos2025adversarial}, as it has analogies in the quantum case.

\subsection{Asymptotic limits}

We began this section by observing that in the original Kelly framework, the adversary’s role was passive— Bob simply set fixed odds, and Alice optimised her utility function against her opponent. In this chapter, we introduced a more dynamic adversary: Bob now actively uses private side information to set odds, while Alice operates without knowing his strategy. Despite this added complexity, an analytic expression for Alice’s wealth rate, under only the assumption that Bob's odds and Alice's bets were consistent across sequences of the same type, can be derived. Equation \ref{singleshotbayesian} reveals that, just as single-shot Kelly betting reduced to Alice guessing the type of a sequence, Alice's role in a game with double side information reduces to inferring the joint type of correlated sequences. 

Just in the way in which expected utility theory analyses Alice's (the expected utility maximiser) behaviour by considering her optimisation of a utility function, game-theoretic analyses of zero-sum games focus on the role of \textit{Nash equilibria} of games. A \textit{Nash equilibrium} is a strategic configuration where no player can improve their payoff by unilaterally changing their strategy, assuming all other players keep theirs fixed. This concept extends to the type of games considered in this section, which are called games with \textit{imperfect information} in game theory.

Analysing expression for the rate at which Alice's wealth grows for finite $n$, and a full analysis of the Bayesian Nash equilibria of the game in this scenario is beyond the scope of the current work, and is left as an open problem for future research. Nonetheless, we consider the limit in which $n \to \infty$, in which we are able to analyse Nash equilibria. Note first that, in this limit, all empirical sequences reflect the distribution of the source, and Alice's wealth grows as Eq. \eqref{eq:GlobalTypeRelative}


Here, it is clear that Alice's and Bob's optimal strategies are to set their bets as $Q^{A}(z|x)=P_{Z|X}(z|x)$ and $Q^{B}(z|y)=P_{Z|Y}(z|y)$ respectively. This is due to the law of large numbers, which guarantees that all types reflect the distribution of the source. It is also clear that no player can improve their payoff by unilaterally changing their strategy, since this would introduce a divergence penalty that reduces the deviating player's payoff. When Alice and Bob play according to these strategies, therefore, Alice's payoff (called the \textit{value} of the game in game theory), is given by 

\begin{equation}
   \boxed{ \frac{Q^{A}(z^{n}|x^{n})}{Q^{B}(z^{n}|y^{n})}= 2^{n(H(Z|Y)-H(Z|X))} }
   \label{eq:asymmtotic}
\end{equation}
which quantifies the asymmetry in Bob and Alice's information. 


\paragraph{Remark}  
The derived payoff duality \emph{suggests} an underlying game structure:  
\begin{itemize}  
\item The minimax value aligns with Nash equilibrium payoffs in classical games.  
\item The optimal strategies ($P_{Z|X}$, $P_{Z|Y}$) coincide with Bayes-rational players.  
\end{itemize}  
Thus, information-theoretic structures \emph{induce} game-theoretic behaviour asymptotically, even without explicit rationality axioms.  In contrast, our formulation does not posit any such common prior or
rationality axioms. Instead, the equilibrium behaviour emerges
\emph{solely} from the information structure: the fact that Alice and Bob
must commit strategies based only on their private signals, and that
their payoffs are constrained by the joint distribution $P_{XYZ}$. In
other words, where Harsanyi derives equilibria by embedding incomplete
information into a fictitious complete-information game with Nature as a
first mover, we show that the same equilibrium structure can be recovered
from the statistical correlations themselves. This shift highlights the
information-theoretic perspective: information asymmetry, rather than
assumptions of rationality, is the fundamental resource that generates
strategic equilibria.

\section{The resource theory of gambling}
\label{sec:GamblingRT}


The term \textit{resource theories} (RTs) refers to a framework which systematically allows the quantification of the resources required to carry out information processing tasks \cite{Oppenheim2010Quantumness,brandao_resource_2015,gour_chitambar_art_2018, gour2025quantum}. The formalism has provided insight into various aspects of physics, including computation, non-equilibrium thermodynamics, entanglement theory, quantum coherence, and quantum physics under symmetric restrictions; offering a unifying approach to understand the limits of information processing in physical systems.

The basic structure of resource theories relies on defining some restricted class $\mathcal{A}$ of \textit{allowed operations} and 
the set $\mathcal{P}$ of allowed states they operate on. 
Defining the set of allowed operations allows one to identify which states are more resourceful: we can say that a state is more resourceful than another if an allowed operation in $\mathcal{A}$  maps the first state to the second. This is because an agent who is restricted to operations from $\mathcal{A}$ would clearly prefer to have the first state over the second state since they can always transition the second state using the first state as an input. At the extreme end, there are states which can be created under the allowed class of operations without requiring the input of any resourceful state -- those are refered to as the set $\mathcal{F} \subseteq \mathcal{P}$ of \textit{free states}. The states which cannot be created under the allowed class of operations for free are a resource.

Before specifying the allowed operations in the resource theory of gambling, it is helpful to motivate their choice. In any resource theory, the resources are those features that cannot be generated under the permitted class of operations. If we wish to consider the resource theory of correlations, the natural class of operations is Local Operations (LO)\cite{HendersonVedral2001}: local processing at Alice, Bob, and Charlie cannot create correlation, hence any observed correlation must be treated as a resource. If Alice and Bob have prior information about the outcome of the race, this is captured by their side-information being correlated with the outcomes $Z$. This provides the rationale for adopting LO as the baseline in our gambling RT. However, in the gambling context, one must also allow communication. 
This captures the fact that, even over $n$ rounds, a gambler’s
payoff can be superlinear in $n$ on rare events (e.g. they could make a long-shot bet and win an arbitrary amount). This
cannot be modelled by Local Operations on some finite sized state e.g. $P_{XYZ}^{\otimes n}$, because one cannot create a superlinear amount of correlation by LO from  $P_{XYZ}^{\otimes n}$.
. 

We will now see that if we formalise a resource theory (RT) whose underlying resource is the ability of
Charlie to deliver classical communication to Alice, then this precisely captures the mathematical structure of gambling, as laid out in the preceding sections.
 
\paragraph{Systems and states.}
The basic system consists of three finite alphabets \(\mathcal X,\mathcal Y,\mathcal Z\).
A \emph{state} is any joint probability distribution \(P_{XYZ}\) on
\(\mathcal X\times\mathcal Y\times\mathcal Z\).
(When we study blocklength \(n\), the state is \(P_{X^nY^nZ^n}\); i.i.d.\ is a
special case with \(P_{XYZ}^{\otimes n}\).) Here, $Z$ are the outcomes of the event which is being bet on (e.g. horse races), and $X$ and $Y$ are Alice and Bob's side information.

\paragraph{Allowed operations.}
Free operations are \emph{local stochastic maps} on Alice and Bob. We will also allow
the \emph{use of a classical channel \(C\!\to\!A\)} whose cost is the number of uses of that channel. The number of uses of the channel may depend on the realised \(z\). Formally, an allowed operation from
\(P_{XYZ}\) to \(P'_{X'Y'Z'}\) consists of:
\begin{enumerate}
	\item Alice: a stochastic map \(T_A:x\mapsto x'\) (possibly using local
	randomness). If the output alphabet is the same as that of $Z$, then this models a bet.
	\item Bob: a stochastic map \(T_B:y\mapsto y'\).
     If the output alphabet is the same as that of $Z$, then this models setting the odds.
	\item Charlie: a deterministic relabelling \(\pi_C:z\mapsto z'\). In the more general setting we could expand this to any stochastic map, but we do not allow this here.
    \item \label{op:cc} Charlie: 
	a classical message \(m\in\{0,1\}^\ell\) sent to Alice (or Bob), whose cost is given by the length of the message
	\(\ell=\ell(z)\) which may depend on \(z\).
\end{enumerate}
No other communication is permitted.  Crucially, the number of bits
	sent from \(C\) to \(A\) in operation \ref{op:cc} is the resource we account for; it is \emph{not} a
free operation, but we do need to allow and account for it.
In this way, the RT of gambling extends the standard RT framework, in that not only are there resourceful states, but also resourceful operations.

\paragraph{Free states.}
Free states are exactly those that can be prepared \emph{without} using any
\(C\!\to\!A\) communication, i.e. the uncorrelated distribution
\[
\mathfrak F \;=\; \bigl\{\,P_{XYZ}=P_XP_YP_Z\,\bigr\}.
\]
Indeed, any product \(P_XP_YP_Z\) can be created by local sampling at
\(A,B,C\); conversely, if \(X,Y,Z\) are not mutually independent, generating
their correlations requires (on average) a positive amount of communication,
hence they are not free.

\paragraph{RT monotones.}
A (state) monotone is any functional $M$ such that
$M(P_{XYZ}) \ge M(P'_{X'Y'Z'})$ for all allowed operations which takes $P_{XYZ}\rightarrow P'_{X'Y'Z'}$. Monotones therefore provide a partial ordering which quantify how valuable different probability distributions are. 
Some natural families of monotones under LO are:
\begin{align}
    M_\alpha
    (P_{XYZ})
    &:= \inf_{Q\in\mathfrak F} D_\alpha\!\left(P_{XYZ}\,\|\,Q\right), \\
    E_\alpha(X:Z)
    &:= \inf_{Q_X} D_\alpha\!\left(P_{XZ}\,\|\,Q_X \otimes P_Z\right), \\
    E_\alpha(Y:Z)
    &:= \inf_{Q_Y} D_\alpha\!\left(P_{YZ}\,\|\,Q_Y \otimes P_Z\right), \\
    E_\alpha(XY:Z)
    &:= \inf_{Q_X \otimes Q_Y} D_\alpha\!\left(P_{XYZ}\,\|\,Q_X \otimes Q_Y \otimes P_Z\right).
\end{align}
where $D_\alpha$ is any data-processing–contractive divergence i.e. $D_\alpha(T(P)||T(Q)) \leq D_\alpha(P||Q)$ for any channel $T$, and we use the $\otimes$ notation to indicate a product probability distribution for easy generalisation to the quantum case in \cite{ARTArcos2025adversarial}.
The Rényi divergences are typically used as contractive distances~\cite{secondlawspaper} in this context with the KL divergence ($\alpha=1$) playing a special role as a unique monotone for reversible resource theories\cite{isentanglementthermo,Oppenheim2010Quantumness,brandao_resource_2015}.
Here, this role is played by the mutual informations, which can be written in terms of the KL divergence as $I(X:Z)=E_1(X:Z)$\cite{CoverThomas2006} and likewise for $I(XY:X)$ and $I(Y:Z)$. In the limit of large $n$ the RT theory becomes asymptotically reversible and we will recover the mutual information as monotones as in Eqn \eqref{eq:asymmtotic}.
The $E_\alpha$ functionals are the classical analogues of correlation monotones familiar from quantum resource theories: they measure the minimal divergence between the actual joint distribution and one where Alice and/or Bob are uncorrelated with the race outcome $Z$. 

Since we do not allow arbitrary channels on $C$, but only relabelling, the following generalisations of conditional entropies (or more accurately, {\it conditional negentropies}) are also monotones in the resource theory
\begin{align}
    E_\alpha(Z|X)
    &:= \inf_{Q_X} D_\alpha\!\left(P_{XZ}\,\|\,Q_X \otimes \bone_Z\right), \\
    E_\alpha(Z|Y)
    &:= \inf_{Q_Y} D_\alpha\!\left(P_{YZ}\,\|\,Q_Y \otimes \bone_Z\right), \\
    E_\alpha(Z|XY)
    &:= \inf_{Q_X \otimes Q_Y} D_\alpha\!\left(P_{XYZ}\,\|\,Q_X \otimes Q_Y \otimes\bone_Z\right).
\end{align}
where $\bone_Z$ is $1$ for all $z$, and we once again use  notation borrowed from quantum theory. $E_1(Z|X)=\log|\mathcal{Z}|-H(Z|X)$, which gives an indication of why we call these conditional negentropies. It is $0$ when $Z$ is uniformly distributed, even when conditioned on $X$, and maximal when $Z$ is perfectly correlated with $X$. 

\paragraph{Adversarial resource quantifiers (ARQs).}
In addition to state monotones, we often care about task-specific payoff functionals
that are optimised by the parties\cite{prehistoricART}. We will call these adversarial resource quantifiers\cite{ARTArcos2025adversarial}, because they can only increase under the action of the protagonist Alice, and only go down under the action of Bob the adversary. For completeness, we include a brief discussion of them here.

Given a convex $f:\mathbb R_+\!\to\!\mathbb R$, we define the ARQ in~\cite{ARTArcos2025adversarial}
\begin{equation}
	\label{eq:ARQ-def}
	\RQ_f^{(n)}(P_{X^nY^nZ^n})
	:= \sup_{Q_A}\inf_{Q_B}
	\,\mathbb E\!\left[
	f\!\left(\frac{Q_A(Z^n|X^n)}{Q_B(Z^n|Y^n)}\right)\right],
\end{equation}
where $Q_A(\cdot|x^n)$ and $Q_B(\cdot|y^n)$ are conditional pmfs (channels)
selected by Alice and Bob.  These are the bets and odds from Section \ref{{doublesideinfo}}.
The ratio $\tfrac{Q_A(Z^n|X^n)}{Q_B(Z^n|Y^n)}$ is thus the \emph{wealth relative} for a given realisation, so Alice’s winnings are a random variable determined by her and Bob’s strategies.  
The functional $\RQ_f^{(n)}$ evaluates this random wealth through the convex scoring function $f$.  
For the special case $f=\log$, we obtain
\[
X_{\log}^{(n)}(P_{X^nY^nZ^n})
= \sup_{Q_A}\inf_{Q_B}\,\mathbb E\!\left[\log\!\left(\tfrac{Q_A}{Q_B}\right)\right],
\]
which is exactly the expected \emph{wealth relative}, i.e.\ the Kelly criterion payoff.  
For general convex $f$, the ARQ instead evaluates the distribution of winnings through $f$, thereby capturing alternative convex risk measures of Alice’s advantage.  
Thus ARQs provide a family of adversarially defined payoff quantifiers, distinct from the RT monotones.

\paragraph{Gambling game realisation (within the RT).}
We now wish to specify what is the equivalent of a gambling game within the context of this resource theory. We require that there is an exact correspondence between the wealth relative in gambling, and the resource in the RT (in this case, correlation or communication of information). 
In this model, wealth is replaced by a physical resource: communication bits. This allows us to see how the rules of betting emerge from the physical constraints of transmitting information. We will also see that gambling can be interpreted purely in terms of information. The amount that a party wins, can be quantified in terms of the difference in  information that each party has about the race outcomes. And this in turn is quantified in terms of the amount of communication that would be required for the parties to learn the race outcome.

The game can be seen as a competition between Alice and Bob to see who can more efficiently compress information about the race outcome, $z^n$. The player with the better compression scheme wins the difference in efficiency as a prize.

\begin{enumerate}
    \item \textbf{The Odds as a Communication Promise:} We interpret Bob's odds as a public commitment to a variable length code, and in particular, a string length $\ell_B(z^n|y^n)$ for each message $z^n$. The analogue of commiting to a  payout to Alice for each possible sequence of race outcomes $z^n$, is that he commits that $\ell_B(z^n|y^n)$ bits will be communicated from Charlie to the winner (Alice) for each possible $z^n$. A reasonable strategy for him would be to commit to sending more bits for outcomes he thinks are unlikely (long odds). He should commit to a sending a smaller number of bits for a likely outcome (short odds). To be a valid, communication strategy that enables Alice to decode the message with certainty, his communication plan must satisfy the Kraft-McMillan inequality, $\sum 2^{-\ell_B} \le 1$. This is the condition for a code to be prefix-free, which is what is required if Alice is to be able to decode $z^n$ unambiguously. If we demand that the book keeper specify a pay-out for every possible $z^n$, then the condition  $\sum 2^{-\ell_B} = 1$ must be satisfied.
  
    \item \textbf{The Bet as a Communication Cost:} Alice's bet is her own declaration of how many bits are truly necessary to identify the outcome $z^n$, given her private information. She commits to a variable length code with costs of $\ell_A(z^n|x^n)$ bits for each outcome. This plan must also be self-consistent and complete ($\sum 2^{-\ell_A} =1$). 
     \item \textbf{The Payoff as Surplus Communication:} When a specific outcome $z^n$ occurs, the game plays out as follows:
        \begin{itemize}
            \item Bob's commitment is triggered, making a channel of capacity $\ell_B(z^n|y^n)$ bits available.
            \item Alice uses $\ell_A(z^n|x^n)$ bits of this capacity to have the outcome communicated to her. 
               \item If $\ell_B \ge \ell_A$, the communication succeeds in Alice learning $z^n$. The leftover channel capacity, $k = \ell_B(z^n|y^n) - \ell_A(z^n|x^n)$, is her winnings. She can immediately use these $k$ surplus bits to receive extra, valuable information (a "jackpot") from the referee, Charlie. This can be formalised by introducing by an auxillary random variable $\mathcal{Z}'$ at Charlie's site, which Alice would also like to learn.
               \item If instead $\ell_B \le \ell_A$, then Alice will need to pay Bob for additional communication from Charlie in order for her to learn $z^n$. In this case, $k$ is negative and represents her cost.
            \end{itemize}
    \end{enumerate}

\noindent Remark: The requirement that their codes be complete and prefix-free is what guarantees that $z^n$ is always uniquely decodable. 
Their codes thus satisfies the Kraft–McMillan equality
\[
\sum_{z^n}2^{-\ell_A(z^n|x^n)}=1,\qquad
\sum_{z^n}2^{-\ell_B(z^n|y^n)}=1.
\]
Identifying \(Q_A:=2^{-\ell_A}\), \(Q_B:=2^{-\ell_B}\), the payout satisfies
\(2^{k}=\bigl(Q_A/Q_B\bigr)\) on winning realisations, reproducing the Kelly
wealth ratio per type.  In the i.i.d.\ regime, optimal average lengths satisfy
\(\frac1n\mathbb E[\ell_A]\to H(Z|X)\) and
\(\frac1n\mathbb E[\ell_B]\to H(Z|Y)\), so the first-order expected payout rate
is \(H(Z|Y)-H(Z|X)=I(X\!:\!Z)-I(Y\!:\!Z)\), consistent with the asymptotics, and with the KL divergence monotones given above.

\section{Conclusion}

In this work, we have reframed the classical theory of gambling as a resource theory of adversarial information. By shifting the focus from an unobservable ``true" distribution to the empirical distribution of finite outcomes, we extended Kelly's seminal work to the single-shot and finite-n regimes. This approach revealed a deep connection between three seemingly disparate fields: the information-theoretic analysis of risk via the method of types, the economic theory of rational choice under uncertainty via CRRA utility functions, and the statistical problem of asymmetric hypothesis testing. We showed that these are not merely analogous but are mathematically equivalent descriptions of the same underlying risk-reward trade-off, governed by the family of Rényi divergences.

Our generalisation to tripartite games with side information demonstrates that Nash equilibria can emerge asymptotically from purely information-theoretic constraints, without recourse to rationality axioms. This suggests that strategic behaviour may arise naturally from optimal information processing rather than requiring separate rational choice principles.

An outcome of this investigation is a fully operational resource theory of gambling. We found, perhaps surprisingly, that a consistent theory of gambling cannot simply be a resource theory of correlations. The possibility of super-linear wins requires a more powerful operational primitive: state-dependent communication. By identifying the gambler's payoff with a surplus of communication bits from a referee, we arrived at a framework where the mathematical structure of betting—specifically, the need for strategies to be probability distributions satisfying Kraft's inequality—emerges as a physical necessity for designing a complete and unambiguous communication protocol. The resource is correlation, the actions are local choices of compression schemes, and the payoff is communication.

This operational resource-theoretic framework provides a robust and natural pathway to the quantum domain. While this paper has focused on the classical foundations, the framework presented here is ripe for generalisation. In our companion work~\cite{ARTArcos2025adversarial}, we develop the broader structure of Adversarial Quantum Resource Theories (AQRTs). There, the classical state $P_{XYZ}$ is replaced by a tripartite quantum state $\rho_{ABC}$.
The results found here allow us to apply the lessons here to a variety of other settings. One notable example is thermodynamics, where we find that the risk-reward trade-off developed here, corresponds to a risk-reward trade-off in work extraction from small systems\cite{AdversarialThermo}.

In the quantum analogue of gambling, we propose in \cite{ARTArcos2025adversarial} that classical channels are replaced by quantum ones, and information transfer is replaced by quantum state transfer. A quantum version of Eqn \eqref{eq:ARQ-def} acts as a quantum adversarial quantifier, to capture the payoff.

In this quantum gambling game, the operational meaning of winning a bet becomes winning quantum communication rather than classical communication. In this way, the quantum analogue of our communication game is a form of adversarial state merging using variable length quantum codes. Charlie is able to send his quantum system to the gambler, against the noise from Bob's system. The payoff is a net gain or loss of entanglement. This puts us in a good position to explore other quantum adversarial resources theories. 

We also find that the one-shot quantum gambling game exhibits a risk-reward tradeoff governed not by classical divergences, but by the quantum hypothesis testing entropy. This demonstrates that the deep connection between gambling and hypothesis testing that we found here, persists and is even enriched in the quantum world. For convenience, a brief sketch of these result can be found in the appendix.

By developing the classical foundation of gambling, we aim to build a bridge that allows the powerful tools of economic decision theory and the concrete scenarios of gambling to be applied to other domains, including quantum information theory. Our results open the door to quantifying what it truly means for one quantum agent to ``know more" than another, a question that lies at the heart of quantum cryptography, quantum communication, and the acquisition of information in the quantum world.

\pagebreak

\section*{Acknowledgements}

We thank Takahiro Sagawa for useful discussions and feedback on the draft. 
MA thanks Francisco Aristi Reina, Galit Ashkenazi-Golan, and Robert Salzmann for valuable discussions. MA was supported by the Mexican
National Council of Science and Technology (CONACYT), EPSRC, the Simons Foundation It from Qubit Network, and SNF Quantum Flagship Replacement Scheme (Grant No. 215933). TS is supported by by JST ERATO Grant Number JPMJER2302, Japan. RR acknowledges support from the National Centre of Competence in Research SwissMAP and the ETH Zurich Quantum Center. 

\appendix



\section{Proof of optimal strategies}\label{lagrangemultipliers}

The aim of this appendix is to provide a step–by–step derivation of the optimisers
for the constrained divergence problems introduced in  \cref{finitekelly}. Though the solutions are well-known in information theory we include them here for completeness. 

In the main text we argued that wealth maximisation can be phrased either as
(i) maximising the divergence to $Q^B_X$ subject to a constraint on the
divergence to $P_X$, or equivalently (ii) minimising the divergence to $P_X$
subject to a constraint on the divergence to $Q^B_X$. Both formulations are convex
optimisation problems on the probability simplex, and both admit solutions lying on the same
family of distributions of the form:
\begin{equation}
Q^{A,\eta}_X(x) \;\propto\; P_X(x)^{1-\eta} Q^B_X(x)^{\eta}, \qquad \eta \in \mathbb{R}.
\end{equation}
What differs between the two problems is only the admissible branch of this
family: the Karush-Kuhn-Tucker (KKT) sign condition restricts $\eta$ to $[0,1]$ in case (i) and to
$\eta \le 0$ in case (ii). We now present the full derivations of both problems.

\begin{lemma}[Probability-constrained payoff maximisation]
Let $P_X$, $Q_X^A$, and $Q_X^B$ be probability distributions on a finite alphabet $\mathcal{X}$. For $d\ge 0$, consider
\[
\max_{Q^{A}_X}\; D(Q^A_X\|Q^B_X)
\quad\text{s.t.}\quad D(Q^A_X\|P_X)\le d,\ \sum_x Q^A_X(x)=1,\ Q^A_X(x)\ge 0.
\]
If the feasible set is nonempty, any optimizer has the form
\[
Q^{A*}_X(x)\;=\;\frac{P_X(x)^{1-\eta}\,Q_X^B(x)^{\eta}}{\sum_y P_X(y)^{1-\eta}\,Q^B_X(y)^{\eta}},
\qquad \eta\in[0,1],
\]
with $\eta=\lambda/(1+\lambda)$ where $\lambda\ge 0$ is the KKT multiplier; $\eta$ is chosen so that $D(Q^{A,*}_X\|P_X)=d$ when the constraint is satisfied
\end{lemma}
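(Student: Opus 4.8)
The plan is to obtain the optimiser directly from the Karush–Kuhn–Tucker (KKT) conditions of the constrained program, after first securing existence and locating where the optimum must sit. I would begin by assuming the non-degenerate case in which $P_X$ and $Q^B_X$ have full support on $\mathcal X$. The feasible set is the intersection of the compact simplex with the closed sub-level set $\{Q^A_X : D(Q^A_X\|P_X)\le d\}$, hence compact, and the objective $Q^A_X\mapsto D(Q^A_X\|Q^B_X)$ is continuous there, so a maximiser $Q^{A*}_X$ exists. The full-support assumption makes the logarithms act as a barrier, pushing the maximiser into the relative interior so that the nonnegativity constraints are inactive. A useful preliminary observation is that $D(\cdot\|Q^B_X)$ is \emph{convex}, so its maximum over the convex feasible set is attained at an extreme point; for the relevant range of $d$ these lie on the constraint surface, which already tells me the inequality constraint binds, i.e.\ $D(Q^{A*}_X\|P_X)=d$.

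Next I would form the Lagrangian
\[
\mathcal L(Q^A_X,\lambda,\nu)=D(Q^A_X\|Q^B_X)-\lambda\bigl(D(Q^A_X\|P_X)-d\bigr)-\nu\Bigl(\textstyle\sum_x Q^A_X(x)-1\Bigr),
\]
with $\lambda\ge 0$, and impose stationarity $\partial\mathcal L/\partial Q^A_X(x)=0$. Each partial derivative is log-linear in $Q^A_X(x)$, so the stationarity condition reduces to $\log Q^A_X(x)=c_0+c_1\log P_X(x)+c_2\log Q^B_X(x)$ for constants depending only on $\lambda,\nu$. Solving and absorbing $\nu$ into the normalisation produces precisely the geometric-mixture (exponential) family
\[
Q^{A,\eta}_X(x)=\frac{P_X(x)^{1-\eta}\,Q^B_X(x)^{\eta}}{\sum_y P_X(y)^{1-\eta}\,Q^B_X(y)^{\eta}}.
\]
Collecting the exponents writes the multiplier through the monotone reparametrisation $\eta=\lambda/(1+\lambda)$, and the KKT requirement $\lambda\ge 0$ then confines $\eta$ to $[0,1]$, with $\eta=0$ returning $P_X$ and $\eta\to 1$ returning $Q^B_X$. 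Complementary slackness together with the binding of the constraint fixes the value of $\eta$ uniquely: since $\eta\mapsto D(Q^{A,\eta}_X\|P_X)$ is monotone along the family, the equation $D(Q^{A,\eta}_X\|P_X)=d$ has a single solution, which inverts to give $\eta=\eta(d)=\eta(\epsilon)$.

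The step I expect to be the genuine obstacle is certifying that this KKT point is the \emph{global} maximiser rather than merely a stationary point. Because the objective is convex, the usual ``stationary $\Rightarrow$ optimal'' guarantee for convex programs is unavailable, so I would not invoke it. Instead I would reduce optimality to a one-dimensional statement on the exponential family and exploit the Pythagorean structure of relative entropy: the decomposition of \cref{eq:reward-identity}, namely $D(Q^{A,\eta}_X\|Q^B_X)=D_\lambda(P_X\|Q^B_X)-\tfrac{\lambda}{1-\lambda}\,D(Q^{A,\eta}_X\|P_X)$, exhibits the reward along the family as an explicit monotone function of the binding constraint value, while a projection (information-geometric Pythagorean) inequality comparing any feasible $Q^A_X$ with its $I$-projection onto the family shows that no off-geodesic distribution can improve the objective under the active constraint. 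Making this projection-plus-monotonicity argument watertight, and separately handling the boundary regime in which $d$ is so large that the constraint is slack (so the problem returns the unconstrained critical point rather than a point on the surface), is where the real work lies; by comparison the algebra that produces the exponential family from stationarity is routine.
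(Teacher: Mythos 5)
Your proposal follows essentially the same route as the paper's proof: form the Lagrangian for the divergence-constrained program, impose stationarity to get a log-linear equation in $Q^A_X(x)$, exponentiate and normalise to land on the geometric-mixture family $P_X(x)^{1-\eta}Q^B_X(x)^{\eta}$, and use complementary slackness to pin down $\eta$ via the active constraint. The extra machinery you flag as the real work (compactness for existence, and the Pythagorean/projection argument to certify that the stationary point is the global maximiser of the convex objective) goes beyond what the lemma actually asserts --- it only claims that \emph{any} optimizer has the stated form, for which the KKT necessary conditions suffice --- and the paper's proof likewise stops at the stationarity computation.
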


\begin{proof}
Write the inequality as $g(Q^A_{X})\equiv D(Q^A_{X}\|P)-d\le 0$ and form
\[
\mathcal L(Q^A_X,\alpha,\lambda)
= D(Q^A_X\|Q^B_X)\;+\;\lambda\,[\,d-D(Q^A_X\|P_X)\,]\;+\;\alpha\Big(\sum_x Q^A_X(x)-1\Big),
\quad \lambda\ge 0.
\]
Using $\partial_{q}\big(q\log\frac{q}{r}\big)=\log\frac{q}{r}+1$, stationarity for $Q^A_X(x)>0$ gives
\[
\Big(\log\tfrac{Q^A_X(x)}{Q^B_X(x)}+1\Big)\;-\;\lambda\Big(\log\tfrac{Q^A_X(x)}{P_X(x)}+1\Big)\;+\;\alpha=0.
\]
Rearranging,
\[
(1-\lambda)\log Q_X^A(x)=\log Q^B_X(x)-\lambda\log P_X(x)-(1-\lambda)(1+\alpha).
\]

Exponentiating and absorbing constants into the normaliser,
\[
Q^A_X(x)\ \propto\ Q^B_X(x)^{\tfrac{1}{1-\lambda}}\;P_X(x)^{-\tfrac{\lambda}{1-\lambda}}
\ =\ P_X(x)^{\,1-\eta}\,Q^B_X(x)^{\,\eta},\qquad
\eta:=\frac{\lambda}{1+\lambda}\in[0,1).
\]
Normalise to obtain the stated $Q^{A,*}_X$. Complementary slackness enforces $D(Q^{A,*}_X\|P)=d$ when the constraint is active; otherwise $\lambda=0$ and $\eta=0$.
\end{proof}

\begin{lemma}[Payoff-constrained probability maximisation]
Fix $P_X,Q^B_X$ and $k\ge 0$. Consider
\[
\min_{Q^{A}_X}\; D(Q^{A}_X\|P_X)
\quad\text{s.t.}\quad D(Q^A_X\|Q^B_X)\ge k,\ \sum_x Q^A_X(x)=1,\ Q^A_X(x)\ge 0.
\]
If the feasible set is nonempty, any optimizer has the form
\[
Q^{A,*}_X\;=\;\frac{P_X(x)^{1-\eta}\,Q^B_X(x)^{\eta}}{\sum_y P_X(y)^{1-\eta}\,Q^B_X(y)^{\eta}},
\qquad \eta\le 0,
\]
with $\eta=-\lambda/(1-\lambda)$ where $\lambda\ge 0$ is the KKT multiplier; $\eta$ is chosen so that $D(Q_A^\star\|Q_B)=k$ when the constraint is active.
\end{lemma}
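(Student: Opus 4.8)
The plan is to mirror the Lagrangian/KKT argument of the preceding lemma with objective and constraint interchanged, and then to close the gap left by the non-convexity of the feasible set using a Pythagorean (tilting) identity. First I would write the active inequality as $g(Q_X^A)\equiv k-D(Q_X^A\|Q_X^B)\le 0$ and form
\[
\mathcal L(Q_X^A,\alpha,\lambda)= D(Q_X^A\|P_X)+\lambda\big[k-D(Q_X^A\|Q_X^B)\big]+\alpha\Big(\sum_x Q_X^A(x)-1\Big),\qquad \lambda\ge 0.
\]
Using $\partial_q\big(q\log\tfrac{q}{r}\big)=\log\tfrac{q}{r}+1$, stationarity at $Q_X^A(x)>0$ gives $\big(\log\tfrac{Q_X^A}{P_X}+1\big)-\lambda\big(\log\tfrac{Q_X^A}{Q_X^B}+1\big)+\alpha=0$, so that $(1-\lambda)\log Q_X^A(x)=\log P_X(x)-\lambda\log Q_X^B(x)+\text{const}$. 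Exponentiating and absorbing constants into the normaliser yields $Q_X^A(x)\propto P_X(x)^{1/(1-\lambda)}Q_X^B(x)^{-\lambda/(1-\lambda)}=P_X(x)^{1-\eta}Q_X^B(x)^{\eta}$ with $\eta:=-\lambda/(1-\lambda)$. For $\lambda\in[0,1)$ this is exactly the claimed branch $\eta\le 0$, in contrast to the $\eta\in[0,1]$ branch of the previous lemma.

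Next I would invoke complementary slackness: either $\lambda=0$, giving $\eta=0$ and $Q_X^{A,*}=P_X$ (feasible only when $P_X$ already obeys $D(P_X\|Q_X^B)\ge k$), or the constraint is active with $D(Q_X^{A,*}\|Q_X^B)=k$. To see that a unique $\eta\le 0$ realises the active case, I would note that along the tilted family $Q_X^{A,\eta}$ the map $\eta\mapsto D(Q_X^{A,\eta}\|Q_X^B)$ is continuous and monotone, increasing above $D(P_X\|Q_X^B)$ as $\eta$ decreases below $0$, so it meets any level $k>D(P_X\|Q_X^B)$ at a single point. Since $P_X,Q_X^B$ have full support the tilted optimiser is strictly positive, so no coordinate saturates $Q_X^A(x)\ge 0$ and the interior stationarity analysis captures every optimiser.

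The genuine obstacle, and the point where this lemma differs from its predecessor, is that $\{Q_X^A:D(Q_X^A\|Q_X^B)\ge k\}$ is the superlevel set of a convex function and hence \emph{not} convex; KKT stationarity is therefore only necessary and cannot by itself certify a global minimum. To settle optimality I would use the exponential-family (Pythagorean) identity: for the tilted point $Q_X^{A,*}$ one shows that every distribution $Q$ on the active boundary $D(Q\|Q_X^B)=k$ satisfies
\[
(1-\eta)\big[D(Q\|P_X)-D(Q_X^{A,*}\|P_X)\big]=D(Q\|Q_X^{A,*})\ge 0,
\]
which is the same tilting relation underlying \eqref{eq:reward-identity}. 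Because $\eta\le 0$ forces $1-\eta>0$, this inequality gives $D(Q\|P_X)\ge D(Q_X^{A,*}\|P_X)$, certifying $Q_X^{A,*}$ as the global minimiser and making transparent why the opposite-sign branch would instead yield a maximiser. Establishing this identity cleanly is the step I expect to take the most care; the Lagrangian computation itself is routine and essentially identical to the first lemma.
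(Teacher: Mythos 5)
Your Lagrangian computation is essentially identical to the paper's own proof: same encoding of the inequality, same stationarity condition, same exponentiation to $Q^A_X(x)\propto P_X(x)^{1-\eta}Q^B_X(x)^{\eta}$ with $\eta=-\lambda/(1-\lambda)\le 0$, and the same use of complementary slackness to fix $\eta$ by $D(Q^{A,*}_X\|Q^B_X)=k$. Up to that point the two arguments coincide and the paper stops there.

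Where you genuinely go beyond the paper is in flagging that $\{Q: D(Q\|Q^B_X)\ge k\}$ is the superlevel set of a convex function, so stationarity plus KKT is only a necessary condition and does not by itself certify a global minimum. Your proposed fix is correct: writing $Q^{A,*}_X\propto P_X^{1-\eta}(Q^B_X)^{\eta}$ and expanding $D(Q\|Q^{A,*}_X)$ gives
\begin{equation*}
D\bigl(Q\,\|\,Q^{A,*}_X\bigr)=(1-\eta)\bigl[D(Q\|P_X)-D(Q^{A,*}_X\|P_X)\bigr]+\eta\bigl[D(Q\|Q^B_X)-D(Q^{A,*}_X\|Q^B_X)\bigr],
\end{equation*}
and on the feasible set the second bracket is nonnegative, so with $\eta\le 0$ the whole second term is $\le 0$; nonnegativity of the left-hand side then yields $D(Q\|P_X)\ge D(Q^{A,*}_X\|P_X)$ for \emph{every} feasible $Q$, not only those on the active boundary as you state it. This is the same tilting relation the paper derives geometrically in Appendix \ref{sec:thesimplex}, but the paper never deploys it to certify optimality in this lemma, so your version is strictly more complete. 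One small caveat: your claim that $\eta\mapsto D(Q^{A,\eta}_X\|Q^B_X)$ meets \emph{any} level $k>D(P_X\|Q^B_X)$ is only true up to the finite supremum attained as $\eta\to-\infty$ (where the tilted family concentrates on $\arg\max_x P_X(x)/Q^B_X(x)$); for $k$ beyond that value the optimizer leaves the stated family and sits on the boundary of the simplex. The lemma's hypothesis of a nonempty feasible set does not by itself exclude this, but the paper ignores the same edge case, so this does not count against you.
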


\begin{proof}
Encode the inequality as $g(Q^A_{X})\equiv d - D(Q^A_{X}\|Q^B_{X})\le 0$ (with $d=k$) and form
\[
\mathcal L(Q_A,\alpha,\lambda)
= D(Q^A_X\|P_X)\;+\;\alpha\Big(\sum_x Q^A_X(x)-1\Big)\;+\;\lambda\,[\,d-D(Q^A_X\|Q^B_X)\,],
\quad \lambda\ge 0.
\]
Stationarity yields
\[
\Big(\log\tfrac{Q^A_X(x)}{P_X(x)}+1\Big)\;-\;\lambda\Big(\log\tfrac{Q^A_X(x)}{Q^B_X(x)}+1\Big)\;+\;\alpha=0,
\]
so
\[
(1-\lambda)\log Q^A_X(x)=\log P_X(x)-\lambda\log Q^B_X(x)-(1-\lambda)(1+\alpha).
\]
Hence
\[
Q^A_X(x)\ \propto\ P_X(x)^{\tfrac{1}{1-\lambda}}\;Q^B_X(x)^{-\tfrac{\lambda}{1-\lambda}}
\ =\ P_X(x)^{\,1-\eta}\,Q^B_X(x)^{\,\eta},\qquad
\eta:=-\frac{\lambda}{1-\lambda}\le 0.
\]
Normalisation gives $Q^{A*}_X$. By complementary slackness, either $\lambda=0$ (constraint inactive, $Q^{A}_{X}=P_X$) or $\lambda>0$ and $D(Q^A_X\|Q^B_X)=k$.
\end{proof}

\section{Expected wealth as a function of strategy}\label{ExpectedwealthAppendix}

The wealth growth of a gambler with betting strategy $Q_{X}^{A}$ is known to be given by 

\begin{equation}
\mathbb{E}[W] = D(P_X || Q^B_X) - D(P_X || Q^A_X) \label{eq:expected_work_start}
\end{equation}

in the asymptotic limit, where $Q^B_X$ determines the odds. 

Our goal is to express this for the optimal strategy $Q_{X}^{A}$ which from now on we write as $Q^{A,r}_X$, to emphasise dependence on the risk parameter of the CARA utility function. We substitute the explicit form of the optimal strategy into the second term.

The optimal strategy for a given risk parameter $\beta$ is:

\begin{equation}\label{eq:optimal_strategy}
Q^{A,\beta}_X(x) = \frac{P_X(x)^{\frac{1}{1-\beta}} Q^B_X(x)^{\frac{-\beta}{1-\beta}}}{Z}, 
\end{equation}

where

\begin{equation}
     Z(\alpha) = \sum_{x'} P_X(x')^{\frac{1}{1-\beta}} Q^B_X(x')^{\frac{-\beta}{1-\beta}} \label{eq:optimal_strategy}
\end{equation}

Let $\alpha = \frac{1}{1-\beta}$ for notational clarity. Since the first divergence is independent of the strategy  $Q^{A,\beta}_X$, we compute the second divergence term, 
\begin{align}\label{eq:intermediate_div}
D(P_X || Q^{A,\beta}_X) &= \sum_x P_X(x) \log \frac{P_X(x)}{Q^{A,\beta}_X(x)} \\
&= \sum_x P_X(x) \left[ \log P_X(x) - \log \left( \frac{P_X(x)^\alpha Q^B_X(x)^{1-\alpha}}{Z} \right) \right] \\
&= \sum_x P_X(x) \left[ \log P_X(x) - \alpha \log P_X(x) - (1-\alpha) \log Q^B_X(x) + \log Z \right] \\
&= (1 - \alpha) \sum_x P_X(x) \log \frac{P_X(x)}{Q^B_X(x)} + \log Z \\
&= (1 - \alpha) D(P_X || Q^B_X) + \log Z(\alpha) 
\end{align}
The normalization constant $Z$ is related to the sum that defines the Rényi divergence. Recall its definition:
\begin{equation}
D_\alpha(P_X || Q^B_X) = \frac{1}{\alpha - 1} \log \sum_x P_X(x)^\alpha Q^B_X(x)^{1-\alpha} = \frac{1}{\alpha - 1} \log Z
\end{equation}
So that 
\begin{equation}
\log Z(\alpha) = (\alpha - 1) D_\alpha(P_X || Q^B_X) \label{eq:logZ}
\end{equation}
Substituting \eqref{eq:logZ} back into \eqref{eq:intermediate_div}:
\begin{align}
D(P_X || Q^{A,\beta}X) &= (1 - \alpha) D(P_X | Q^B_X) + (\alpha - 1) D_{\alpha}(P_X || Q^B_X) \\
&= (1 - \alpha) \left[ D(P_X || Q^B_X) - D_\alpha(P_X || Q^B_X) \right]
\end{align}
Finally, we substitute this result back into the original expression for expected wealth \eqref{eq:expected_work_start}:

\begin{align}
    \mathbb{E}[W] &= D(P_X || Q^B_X) - D(P_X || Q^{A,r}X) \\
    &= D(P_X || Q^B_X) - (1 - \alpha) \left[ D(P_X || Q^B_X) - D_{\alpha}(P_X || Q^B_X) \right] \\
    &= D(P_X || Q^B_X) - (1 - \alpha) D(P_X || Q^B_X) + (1 - \alpha) D_\alpha(P_X || Q^B_X) \\
    & = \alpha D(P_X || Q^B_X) + (1 - \alpha) D_\alpha(P_X || Q^B_X)
\end{align}
%
%
Thus, the expected wealth growth for the optimal strategy corresponding to risk parameter $r$ is:
\begin{equation}
\boxed{
\mathbb{E}[W] = \alpha D(P_X || Q^B_X) + (1 - \alpha) D_\alpha(P_X || Q^B_X), \quad \text{where} \quad \alpha = \frac{1}{1-\beta}
} \label{eq:final_expected_work}
\end{equation}

\section{Geodesics and risk-reward trade-offs}
\label{sec:thesimplex}

It will be useful to consider flat space, and to go from the standard metric 

\begin{equation}
    ds^{2}= dx_{0}^{2}+dx_{1}^{2}
\end{equation}

to the metric 

\begin{equation}
    ds^{2}= \frac{dy_{0}^{2}}{y_{0}^{2}} + \frac{dy_{1}^{2}}{y_{1}^{2}}
\end{equation}

via the change of coordinates $y_{i}=\ln(x_{i})$. In the new metric, the geodesic equation becomes 

\begin{equation}
    \ddot{y}_{i}= \frac{\dot{y}_{i}^{2}}{y_{i}}
\end{equation}

with boundary conditions $y_{i}(0)=p_{i}$ and $y_{i}(1)=q_{i}$, the equation has solution 

\begin{equation}
    y_{i}(\lambda)= p_{i}\exp(-\lambda \ln{p_{i}/q_{i}})
\end{equation}

and so the region of the probability simplex through the points $(p_{0},p_{1})$ and $(q_{0},q_{1})$ can be parametrised as 

\begin{equation}
    \gamma(\lambda)= \frac{1}{s}( p_{0}^{1-\lambda} q_{0}^{\lambda},  p_{1}^{1-\lambda} q_{1}^{\lambda} )
\end{equation}

for $\lambda \in (0,1)$, where 

\begin{equation}
    s= \sum_{i}p_{i}^{1-\lambda}q_{i}^{\lambda}
\end{equation}

Since any distribution $(r_{0},r_{1})$ on the line satisfies 

\begin{equation}
    r_{i}(\lambda)= \frac{1}{s}p_{i}^{1-\lambda}q_{i}^{\lambda}
\end{equation}

for any $\lambda \in (0,1)$, one has the equations 

\begin{equation}
    \ln( \frac{r_{i}}{p_{i}} ) = \ln(\frac{1}{s}) + \lambda \ln(\frac{q_{i}}{p_{i}})
\end{equation}
and
\begin{equation}
    \ln( \frac{r_{i}}{q_{i}} ) = \ln(\frac{1}{s}) + (\lambda-1) \ln(\frac{q_{i}}{p_{i}})
\end{equation}
which imply 
\begin{equation}
    D(R^{\lambda}||P)= \ln(\frac{1}{s}) +  \lambda \sum_{i} r_{i} \ln(\frac{q_{i}}{p_{i}})
\end{equation}

and 

\begin{equation}\label{lambdavaries}
    (1-\lambda) D(R||P)+ \lambda D(R||Q) = \ln(\frac{1}{s})
\end{equation}



Note that for $\lambda>0$ 

\begin{equation}
    \ln(s)= (\lambda-1) D_{\lambda}(q||p)
\end{equation}

 and hence equation \ref{lambdavaries} can be written 

 \begin{equation}\label{lambdavaries}
    (1-\lambda) D(R||P)+ \lambda D(R||Q) = (1-\lambda) D_{\lambda}(Q||P)
\end{equation}

\section{Bets and odds as stake vectors and probability distributions}
\label{sec:stakes}
In Kelly betting, Bob, given his side information $y^n$ sets the odds for each outcome $z^n$, while conditioned on $x^n$ Alice may bet on each outcome. While this assigns some weight to each outcome $z^n$ if  the odds are superfair (sum to $1$ and if Alice stakes all of her portfolio on the outcome of $z$ then we may think of the odds and bet as a probability distribution as follows.
Given a realisation $x\in\mathcal{X}$, Alice chooses a \emph{stake vector}
\[
\mathbf{\bet}(x) = \big(\bet_z(x)\big)_{z\in\mathcal{Z}}, \quad \bet_z(x) \ge 0, \quad \sum_{z\in\mathcal{Z}} \bet_z(x) = 1,
\]
where $\bet_z(x)$ is the fraction of her wealth she allocates to outcome $z$ on observing $x$. 
Similarly, given $y\in\mathcal{Y}$, Bob chooses an \emph{odds vector}
\[
\mathbf{o}(y) = \big(o_z(y)\big)_{z\in\mathcal{Z}}, \quad o_z(y) \ge 0, \quad \sum_{z\in\mathcal{Z}} o_z(y) = 1.
\]
Although operationally $\mathbf{\bet}(x)$ and $\mathbf{o}(y)$ are portfolios of stakes and odds, mathematically they are points in the simplex $\Delta(\mathcal{Z})$ and can be identified with conditional probability distributions
\[
q(z|x) := \bet_z(x), 
\qquad r(z|y) := o_z(y).
\]
Thus Alice's betting strategy can be modelled as a classical channel
\[
\mathcal{B}_A: X \to A', \quad q(\cdot|x) \in \Delta(\mathcal{Z}),
\]
and Bob's odds-setting strategy as
\[
\mathcal{B}_B: Y \to B', \quad r(\cdot|y) \in \Delta(\mathcal{Z}),
\]
where $A'$ and $B'$ are registers holding the respective stake and odds vectors. We can thus think of the act of betting and placing odds, as local channels.
This identification allows us to treat bets and odds as conditional distributions, which is both notationally convenient and prepares the ground for the quantum generalisation to $\rho_{ABC}$.

\section{Sketch of Adversarial Quantum Resource Theories}

Below we sketch relevant details from \cite{ARTArcos2025adversarial} which may be helpful for the reader.

\subsection{A risk-reward trade-off for gambling with side information }

In \cite{arcos2025gambling} a risk-reward trade-off was derived, which quantified how much wealth Alice could gain (the reward), given that she wants to succeed with probability at least $\epsilon$ (the risk). Here, we will extend the result to the case with side informtion, so that we can apply it to the quantum case.
Let us consider another form for the wealth gained, namely
\begin{align}\label{singleshotbayesian}
    \frac{Q^{A}(z^{n}|x^{n})}{Q^{B}(z^{n}|y^{n})}
    =&
     2^{n(D(\lambda_{y^{n}z^{n}}\| \tilde{W}_{YZ} )-D(\lambda_{x^{n}z^{n}}\| \tilde{W}_{XZ})  
    + H_\lambda(Z|Y)-H_\lambda(Z:X)
     )}
\end{align}
where $\tilde{W}_{XZ}(x,z) = \lambda_{x^{n}}(x)Q^{A}(z|x)$, $\tilde{W}_{YZ}(y,z)=\lambda_{y^{n}}(y)Q^{B}(z|y)$ are the distributions defined by Alice's and Bob's strategies respectively, and 
$H_\lambda(Z|Y):=H(\lambda_{y^{n},z^{n}})-H(\lambda_{y^{n}})$, $H_\lambda(Z|X):=H(\lambda_{x^{n},z^{n}})-H(\lambda_{x^{n}})$ can be thought of as a empirical mutual information and is independent of Alice and Bob's strategies.

The gambling rate with side information is
\begin{equation}
   R= D(\lambda_{y^{n}z^{n}}\| \tilde{W}_{YZ} )-D(\lambda_{x^{n}z^{n}}\| \tilde{W}_{XZ})  
    + H_\lambda(Z|Y)-H_\lambda(Z|X)
    \label{eq:wealthrate}
\end{equation}
and we note that the $H_\lambda(Z|Y)-H_\lambda(Z|X)$ term is independent of Alice and Bob's strategy. While it is important for computing the rate Alice's wealth changes, we can drop it when optimising Alice and Bob's strategy. We can then see that we have a risk reward trade-off with the same mathematical structure as hypothesis testing. 
The probability of Alice's bet succeeding $P_{suc}$ is approximately the probability of her chosen type class occurring, which is (up to subexponential factors) 
\begin{align}
  P_{suc}&\approx P(\lambda_{y^nz^n})\approx 2^{-nD(\lambda_{y^nz^n}||P_{YZ})}\geq \epsilon
  \label{eq:risk}
\end{align}
which is the risk (the probability that her bet succeeds). For a fuller discussion of this, we refer the reader to \cite{arcos2025gambling}. If we require that Alice's bet succeeds with a probability greater than $\epsilon$, then this corresponds to a set of typeclasses $\lambda_{y^nz^n}$ which satisfy Eq. \eqref{eq:risk}. If Alice bets according to $\tilde{W}_{XZ}:=\lambda_{x^n}Q^A(z|x)=\lambda_{x^nz^n}$ and if this typeclass occurs, then the rate of increase of $W_F/W_i$ (the reward) is from Eqn \eqref{eq:wealthrate},
\begin{align}
    R=D(\lambda_{y^{n}z^{n}}\| 1_YQ^B )
    + H_\lambda(YZ)-H_\lambda(Z|X)
    \label{eq:reward}
    \end{align}
were $1_Y$ is $1$ for each string $y^n$, such that $\mu_Y:=\frac{1_Y}{|Y|}$ is the uniform distribution over strings $y^n$, and $D(\lambda_{y^{n}z^{n}}\| 1_YQ^B )$ is defined as on probability distributions (i.e. $D(\lambda_{y^{n}z^{n}}\| 1_YQ^B )=D(\lambda_{y^{n}z^{n}}\| \mu_YQ^B )-\log|Y|$)
    
    The optimisation is then as follows: we require that $D(\lambda_{y^nz^n}||P_{YZ})\leq\frac{1}{n}\log\epsilon$ and then want to maximise $D(\lambda_{y^{n}z^{n}}\| 1_YQ^B)$ subject to the risk constraint. I.e. we want to maximise our rate of return when our bet succeeds. This optimisation is precisely that found in asymmetric hypothesis testing: We can think of $P_{YZ}$ as the alternative hypothesis, and $\mu_YQ^B$ the null hypothesis. Given a bound on the type I error (false positives), we want to minimise the type II error (false negatives). This optimisation has a well known solution\cite{neyman1933efficient,CoverThomas2006}.

\section{Quantum gambling} 

We have reviewed gambling in a classical settings, where Alice and Bob interact in games with partial information about the outcome of a race $Z$. 
We reviewed how Alice's wealth evolves over time. In \cite{arcos2025gambling}, gambling over a finite number of rounds was considered, and these results were connected with expected utility theory, and hypothesis testing. These classical frameworks provide valuable insights into decision-making under uncertainty, but they do not account for the unique features of quantum systems, such as entanglement and superposition, which are crucial for understanding quantum resource theories. 

In this section, we begin to extend the adversarial gambling framework to the quantum domain, where Alice, Bob, and Charlie share quantum states, and their strategies depend on quantum correlations shared between their systems. Since considering quantum states shared between three parties is complex, we first begin by considering the notion of a quantum adversarial game when the adversary is passive. We will find that this corresponds to quantum state merging with classical communication. We will then consider active adversaries, and find this corresponds to adversarial quantum state merging using variable length quantum codes, in a way which is similar to the classical gambling game using variable length classical codes described in Section \ref{sec:GamblingRT}.

\subsection{Quantum gambling with a passive adversary: state-merging}

In quantum gambling we replace $P_{XYZ}$ the classical side information used in classical gambling, with $n$ copies of a pure state $|\psi\rangle_{ABC}$. Classical communication is free. Here, setting the odds is implicit, since Bob's state forces Charlie to use more entanglement. Here we have a risk reward trade-off. The probability of Charlie to succeed in state-merging his share $C$ to Alice, is required to be greater than some $\epsilon$, and then we want to maximise (minimise), the amount of pure state entanglement Alice and Charlie 
obtain (require).

In the one-shot case, the entanglement consumed to carry out state merging~\cite{HOW2005} for a state $\rho_{AC}$ is tightly characterised by the smooth max-entropy. Specifically, the number~$N$ of ebits required (or obtained, if $N < 0$) to achieve state merging with success probability $1-\varepsilon$ satisfies~\cite{DBWR2014}
\begin{align} \label{eq:statemerging}
H_{\max}^{\varepsilon'}(C|A)_\rho - 2 \log(1/\varepsilon) \leq N \leq H_{\max}^{\varepsilon''}(C|A) + 4 \log(1/\varepsilon)_\rho + \mathrm{const} ,
\end{align}
with smoothing parameters $\varepsilon'$ and $\varepsilon''$ close to $\varepsilon$ (specifically, $\log(1/\varepsilon') = \mathrm{const'} \log(1/\varepsilon)$ and $\log(1/\varepsilon'') = \mathrm{const''} \log(1/\varepsilon)$). 

The smooth max-entropy can be related to hypothesis testing. We have
\begin{align}
    H_{max}^\epsilon(C|A)_\rho&=-H^\epsilon_{min}(C|B)\label{eq:becPure}\\
    &:=\inf_{\sigma_B}D^\epsilon_{max}(\rho_{BC}||\mathrm{id}_C\otimes\sigma_B)\label{eq:becDef}\\
   & \approx -H_H^\epsilon(C|B)\label{eq:becRLD}
\end{align}
where Eq \eqref{eq:becPure} follows from the purity of $|\psi\rangle_{ABC}$,and Eqn \eqref{eq:becRLD} follows from ~\cite{RLD2025},
where the approximation is to be understood up to additive terms of the order $\log(1/\varepsilon)$ as in~\eqref{eq:statemerging}, and where
\begin{equation}
    H_H^\varepsilon(C|B)_{\rho} = -\inf_{\sigma_B} D_H(\rho_{B C} \| \mathrm{id}_C \otimes \sigma_B) \ .    
    \label{eq:hyptestentropy}
\end{equation}
The quantity on the right hand side $D_H$, is the \emph{hypothesis testing entropy}, defines as the error exponent in hypothesis testing. Concretely, for any states $\rho$ and $\sigma$,
\begin{equation}
    2^{-D_H^\varepsilon(\rho \| \sigma)} := \frac{1}{\varepsilon} \inf \{ \tr(\Pi\sigma): 0 \leq \Pi \leq \mathrm{id} \wedge \tr(\Pi \rho) \leq \varepsilon\} \ .
\end{equation}
Combining this with the above relations thus shows that the amount of entanglement required (obtained) by state merging is directly related to hypothesis testing, and of a form which is similar to the risk reward trade-off in the case of classical gambling with side-information, given by Eqns. \eqref{eq:risk} and \eqref{eq:reward}. In the classical case, $P_{YZ}$ was the alternative hypothesis, and $\mu_YQ^B$ the null hypothesis. Here, $\rho_{BC}$ is the alternative hypothesis and $\frac{\mathrm{id}_C}{d_C}\otimes\sigma_B$. Alice fixes a probability by which she wants to win the bet (obtain $C$ and some number of ebits), and given that restriction she optimises for the number of ebits she wins when she succeeds. Alternatively, she can fix the number of ebits she wants to win, and then find the protocol which has the highest probability of success for that reward.

The expressions for classical and state-merging risk-reward trade-off thus look almost identical, but there is a crucial difference. In the classical case, Bob's odds $Q^B$ are set by the adversary and are arbitrary. Alice can place a long shot bet, and win an arbitrarily large amount of money, by betting on a type class $\lambda_{y^nz^n}$ such that $D(\lambda_{y^nz^n}||1_YQ^B)$ is very large. On the other hand, in state-merging, we take the infimum over all $\sigma_B$, and the number of ebits that Alice wins will not be superinear in $n$. 

\subsection{Quantum gambling as variable length state-merging}

Let us recall that in looking for a resource theory of classical gambling in \cite{arcos2025gambling}, it turned out that we had to allow for communication from Charlie to Alice, and this was the valuable resource which Alice and Bob used as currency in their gambling game. This was required because correlation itself was not enough -- it couldn't account for long shot bets where Alice wins a superlinear amount of money, and arbitrary odds by Bob. The analogy carries through to the quantum case. It is not enough to consider Local Operations and Classical Communication, and entanglement manipulation, since this gives us the restricted setting found with state-merging  -- there is no active adversary who can set arbitrary odds, and Alice cannot place long shot odds.

The solution, analogous to the classical case, is to allow for both parties to bet communication resources, as with the variable length coding scheme discussed in Section \ref{sec:GamblingRT}. We again consider many copies of $|\psi\rangle_{ABC}$, but this time, we imagine Alice, Bob and Charlie each performing a measurement on their local state. In order to keep the game fully quantum, he will perform the measurement coherently, so that at the end of the protocol, the state $|\psi\rangle_{ABC}^{\otimes n}$ will remain pure, with Charlie's share transfered to Alice. This requires the parties to perform operations which are contingent on the outcome of the measurement, but they must all be done coherently. The operations that correspond to gambling consiste of the following:
\begin{itemize}
    \item Bob will perform a coherent and incomplete measurement on the eigenbasis $|y^n\rangle_B$ of his local quantum state, to obtain the typeclass of $y^n$, $\lambda_{y^n}$. We may allow more general coherent measurements, but we will not consider this here.   Conditional on $\lambda_{y^n}$ he will then set the odds for each possible outcome of Charlie's coherent measurement, which we will label as $|z^n\rangle_C$. 
    Although we write $\ell_B(z^n|y^n)$ as dependent on the strings themselves, they will have the same value for strings $z^n$ and $y^n$ which share the same typeclass $\lambda_{y^n}$, $\lambda_{z^n}$, since strings with the same typeclass are all equally likely to occur.
    Analogously to the classical gambling resource theory, the odds correspond to a promise of $\ell_B(z^n|x^n)$ ebits between Alice and Charlie, for each outcome $|z^n\rangle_C$. In order that Alice receive the state unambiguously, these lengths will need to correspond to a prefix free code and must satisfy the Kraft-McMillian equality
\begin{align}
    \sum_{z^n}2^{-\ell_B(z^n|y^n)}=1
\end{align}
    \item Alice performs a similar measurement onto the typeclass $\lambda_{x^n}$ in the eigenbasis of her state $|x^n\rangle_A$. Conditioned on $\lambda_{x^n}$, she places a bet, corresponding to how many ebits $\ell_A(z^n|x^n)$ she will need to to receive the state from Charlie, contingent on the outcome of his coherent measurement. Once again, these only depend on the typeclasses of $x^n$ and $y^n$, and must satisfy
    \begin{align}
    \sum_{z^n}2^{-\ell_A(z^n|x^n)}=1
    \end{align}
in order that Alice unambiguously receives Charlie's state.
    \item Charlie measures his typeclass $\lambda_{z^n}$ in his eigenbasis $|z^n\rangle_C$.
    Conditioned on all these outcome, Alice, Bob and Charlie share some state $|\psi^{\lambda}\rangle_{ABC}$, where $\lambda$ labels the joint typeclass $\lambda_{x^ny^nz^n}$.
    \item We now wish to coherently add or subtract, a number of additional ebits to our state, conditional on $\lambda_{x^ny^nz^n}$. To do this, we can use an embezzling state shared by Alice and Charlie, in a manner very similar to \cite{berta2011quantum}, who considered a related state-splitting protocol in the context of the quantum reverse Shannon Theorem. In particular, we can consider the embezzling state to be tantamount to an ''entanglement battery''\cite{alhambra2019entanglement}. If Bob assigns odds $\ell_B{z^n|y^n}$, to each outcome $z^n$, then this corresponds to him adding  $\ell_B{z^n|y^n}$ to Alice and Charlie's entanglement battery, conditional on $\lambda_{y^nz^n}$. On the other hand, Alice may only require $\ell_A{z^n|x^n}$ ebits for Charlie to state-merge  the $C$ share of $|\psi^{\lambda}\rangle_{ABC}$, to her. 
\end{itemize}

 Alice's cost for each $|\psi^\lambda\rangle_{ABC}$ branch of the protocol, $\ell_A(\lambda)$, is the number of ebits required to perform one-shot state merging on the post-measurement state $|\psi^\lambda\rangle_{ABC}$. This cost is given by the smooth conditional max-entropy, $\ell_A(\lambda) \approx H_{max}^{\varepsilon}(C|A)_{\psi_\lambda}$. The difference between the number of ebits she needs, and the number of ebits Bob has promised, $k_{win}(\lambda) = \ell_B(\lambda) - \ell_A(\lambda)$, is her winnings or loss. This corresponds to the net number of ebits deposited into (or withdrawn from) Alice's entanglement battery. 
 In the asymptotic limit, the number of ebits in Alice's entanglement batter  will increase at a rate of $H(C|A)-H(C|B)$, which correspond to the asymptotic Kelly betting rate, in the case of side information.

\printbibliography

\end{document}